\algrenewcommand\algorithmicrequire{\textbf{Input:}}
\algrenewcommand\algorithmicensure{\textbf{Output:}}
\definecolor{americanrose}{rgb}{1.0, 0.01, 0.24}
\definecolor{airforceblue}{rgb}{0.36, 0.54, 0.66}
\newcommand{\fn}[1]{\textcolor{black}{#1}}
\newcommand{\pranay}[1]{\textcolor{black}{#1}}
\newcommand{\fnrev}[1]{\textcolor{black}{#1}}
\newcommand{\nikolausrev}[1]{\textcolor{black}{#1}}
\newcommand{\pranayrev}[1]{\textcolor{black}{#1}}
\newcommand{\camera}[1]{\textcolor{black}{#1}}
\newcommand{\todo}[1]{{\color{blue} #1}}
\newcommand{\tsim}{\ensuremath{\mathrm{sim}}}
\newcommand{\topklb}{\mathcal{L}_{lb}}
\newcommand{\topkub}{\mathcal{L}_{ub}}
\newcommand{\pqub}{\mathcal{Q}_{ub}}
\newcommand{\sorefine}{{\sc  Refinement}\xspace}
\newcommand{\algo}{{\sc  Koios}\xspace}
\newcommand{\algoplus}{{\sc  Koios+}\xspace}
\newcommand{\sopost}{{\sc  PostProcessing}\xspace}
\DeclareMathOperator*{\argmax}{arg\,max}
\newtheorem{definition}{{\bf{\em Definition}}}
\newtheorem{lemma}{Lemma}
\newcommand{\eop}{\hspace*{\fill}\mbox{$\Box$}}
\newcounter{example}
\renewcommand{\theexample}{\arabic{example}}
\newenvironment{example}{
        \vspace{1ex}
        \refstepcounter{example}
        {\noindent\bf Example \theexample:}}
	{\eop\vspace{1ex}}
\newcommand{\eat}[1]{}
\newcommand{\squishlist}{
 \begin{list}{$\bullet$}
  { \setlength{\itemsep}{0pt}
     \setlength{\parsep}{1pt}
     \setlength{\topsep}{1pt}
     \setlength{\partopsep}{0pt}
     \setlength{\leftmargin}{1em}
     \setlength{\labelwidth}{1em}
     \setlength{\labelsep}{0.5em} } }
\newcommand{\squishend}{
  \end{list}
}
\def\BibTeX{{\rm B\kern-.05em{\sc i\kern-.025em b}\kern-.08em
    T\kern-.1667em\lower.7ex\hbox{E}\kern-.125emX}}
\begin{document}

\title{{\sc Koios}: Top-$k$ Semantic Overlap Set Search}

\author{
\IEEEauthorblockN{Pranay Mundra}
\IEEEauthorblockA{
\textit{University of Rochester}\\
pmundra@ur.rochester.edu}
\and
\IEEEauthorblockN{Jianhao Zhang\textsuperscript{\textsection}}
\IEEEauthorblockA{\textit{Acho Software Inc.} \\
jianhao@acho.io}
\and
\IEEEauthorblockN{Fatemeh Nargesian}
\IEEEauthorblockA{
\textit{University of Rochester}\\
fnargesian@rochester.edu}
\and
\IEEEauthorblockN{Nikolaus Augsten}
\IEEEauthorblockA{
\textit{University of Salzburg}\\
nikolaus.augsten@plus.ac.at}
}



\maketitle
\begingroup\renewcommand\thefootnote{\textsection}
\footnotetext{Work done while at the University of Rochester.}
\endgroup

\begin{abstract}
\pranay{We study the top-$k$ set similarity search problem using semantic overlap. While \fnrev{vanilla} overlap requires exact matches between set elements, {\em semantic overlap} allows \nikolausrev{elements} that are syntactically different but semantically \nikolausrev{related}\eat{similar} to increase the overlap. The semantic overlap is the maximum matching score of a bipartite graph, where an edge weight between two set elements is defined by a user-defined similarity function, e.g., cosine similarity between embeddings.  
\nikolausrev{Common techniques like token indexes fail for semantic search since similar elements may be unrelated at the character level. Further, verifying candidates is expensive (cubic versus linear for syntactic overlap), calling for highly selective filters.}
\eat{For the semantic search, common techniques like token indexes fail as verifying candidates is expensive (cubic vs.\ linear for syntactic overlap).} We propose \algo, the first exact and efficient algorithm for semantic overlap search. \algo leverages sophisticated filters to minimize the number of required graph-matching calculations. Our experiments show that for medium to large sets less than 5\% of the candidate sets need \nikolausrev{verification}, and more than half of those sets are further pruned without requiring the expensive graph matching. 
We show the efficiency of our algorithm on four real datasets and demonstrate the improved result quality of semantic over \fnrev{vanilla set similarity}  search.} 
\end{abstract}

\begin{IEEEkeywords}
\pranay{Set Similarity, \nikolausrev{Semantic Overlap}, Semantic Search, Bipartite Graph Matching, Semantic Join\eat{, Fuzzy-Token Similarity}}
\end{IEEEkeywords}

\section{Introduction}
\label{sec:intro}

Set similarity search is a central task  in a variety of applications, such
as data cleaning~\cite{HadjieleftheriouYKS08,WangH19,Yu:2016}, data integration~\cite{0002LECK19,Papadakis2023},  document search~\cite{chen2006similarity},  and dataset discovery~\cite{ZhuNPM16,ZhuDNM19}. 
\fnrev{
The similarity of two sets is typically assessed using vanilla overlap (the number of identical elements of two sets)~\cite{MannAB16,ZhuDNM19} or some normalization of the overlap~\cite{ZhuNPM16,FernandezMNM19}. 
\camera{In the presence of open-world vocabulary and transient quality of data, the vanilla overlap turns out to be ineffective  for sets of strings since it only considers  exact matches between set elements.} 
To address this problem,  fuzzy set similarity search techniques \nikolausrev{like Fast-Join~\cite{5767865,WangLF14} and SilkMoth~\cite{DengKMS17}}
combine 
set similarity and character-based similarity functions on the string set elements, e.g., edit-distance or Jaccard on element tokens. The fuzzy overlap of two sets is  the maximum matching score of a bipartite graph with set elements as nodes and their pairwise character similarity as edge weights.  Unfortunately, fuzzy search can only handle typos and small dissimilarities in set elements and fails for elements that are semantically equivalent or similar but are unrelated at the character level. Since fuzzy set search techniques heavily rely on exact matches between tokens of elements, they cannot be extended to  semantic similarity measures.}

\begin{example} \label{ex:fuzzy}\nikolausrev{Consider query set $Q$ and the collection $\mathcal{L}=\{C_1,C_2\}$ of candidate sets in Fig.~\ref{fig:sym-vs-sem}. The goal is to find the \camera{\mbox{top-$1$}} similar set to $Q$ in $\mathcal{L}$. (1) \emph{Vanilla overlap} considers only the exact match on the set element {\tt\small LA} to assess pairwise set similarities. Typos ({\tt\small Blaine} vs.\ {\tt\small Blain}), synonyms ({\tt\small BigApple} and {\tt\small NewYorkCity}), or other relations between \ elements (e.g., the fact that {\tt \small Charleston} and {\tt \small Columbia} are two cities in South Carolina, {\tt \small SC}) are ignored. (2) \emph{Fuzzy similarity search} allows for matches between syntactically similar set elements. With Jaccard similarity on 3-grams, the relationship between {\tt\small Blaine} to {\tt\small Blain} is detected (3-grams and similarities shown in the figure). However, {\tt\small BigApple} and {\tt\small NewYorkCity} do not contribute to the set similarity; instead, {\tt\small BigApple} is matched to {\tt\small Appleton}, a city in Wisconsin, due to the resemblance of these terms at the character level. Other relationships between set elements are not detected. Therefore, $C_1$ is ranked \camera{\mbox{top-$1$}}, although $C_2$ is more similar to the query: $C_2$ matches on {\tt\small LA} and {\tt\small Blaine} like $C_1$, but in addition has synonyms and semantically related elements. } 
\eat{The vanilla overlap performs poorly: only the number of identical elements) of $C_1$ and $C_2$ is $1$ for both sets. To compute the similarity of $C_1$ and $C_2$ to $Q$, fuzzy set similarity techniques such as Fast-join~\cite{5767865}  and SILKMOTH~\cite{DengKMS17} build a bipartite graph and assign the edit distance or Jaccard similarity of the set of n-grams of elements as  edge weights.  The first column of Fig.~\ref{fig:sym-vs-sem} shows the Jaccard similarity of sets of $3$-grams of elements. Elements with zero Jaccard similarity are not shown. For example, SILKMOTH~\cite{DengKMS17} assigns the  similarity of $3/4$ between the ``Blaine'' in $Q$ and the  typo ``Blain'' in $C_1$ and $C_2$. Interestingly, this technique assigns a similarity score of $1/3$ to {\tt \small BigApple} and {\tt \small Appleton} and a similarity score of $0$ to {\tt \small BigApple} and {\tt \small NewYorkCity}, despite their semantic equivalence. As a result, after computing maximum graph matching of sets, as shown in the middle column of Fig.~\ref{fig:sym-vs-sem}, set search returns $C_1$ as top-$1$ similar set to $Q$. While, in fact in applications such as join search, where $Q$, $C_1$, and $C_2$ are column values in different tables, a data scientist may find $C_2$ more desirable than $C_1$, since it allows joining the record of {\tt \small BigApple} with the record of {\tt \small NewYorkCity} or even joining {\tt \small Charleston} or {\tt \small Columbia} to {\tt \small SC} (two cities in South Carolina state). A search algorithm based on a set similarity measure that takes  into account the semantic similarity of  pairs of elements, even with no character-based similarity, creates the opportunity to boost the similarity of $C_2$ to $Q$. }
\end{example}

\nikolausrev{In this paper, we present {\em semantic overlap} and the \algo algorithm that solves the top-$k$ set similarity search problem using this novel measure. 
Semantic overlap generalizes vanilla and fuzzy overlap and allows semantically related set elements that are unrelated at the character level to contribute to the overall set similarity.}
\fnrev{
Given a query set $Q$  and a candidate set $C$, 
we construct a weighted bipartite graph, where a node is an element in 
$Q$ or $C$, and  an edge between an element of $Q$ and an element of $C$ is weighted by their semantic similarity. The maximum bipartite matching selects a subgraph, such that no two edges share a node and the sum of the subgraph edge similarities is maximized. Following fuzzy set search literature~\cite{DengKMS17,5767865},  
the mapping between set elements is an optional one-to-one mapping. 
The semantic similarity is quantified by a user-specified function, e.g.\camera{,} cosine similarity of embeddings of elements.} 
\begin{figure*}[tbp]
\centering
    \includegraphics[width=\linewidth]{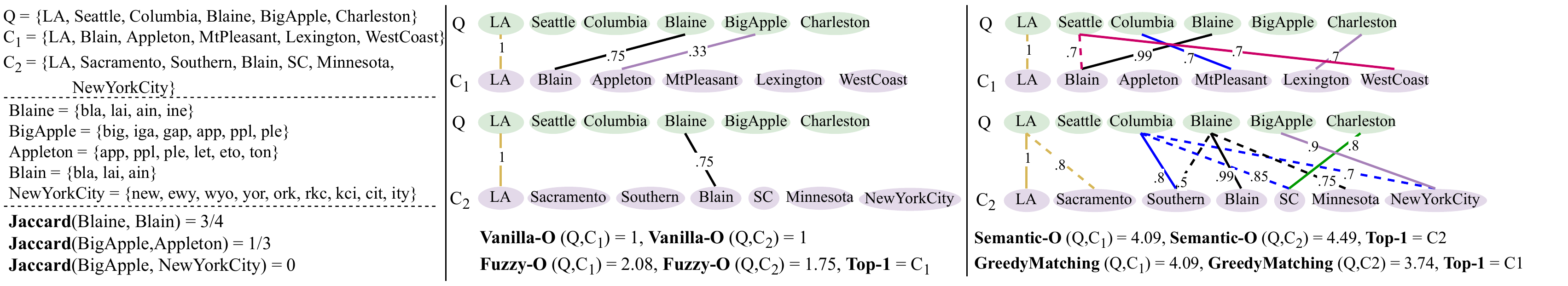}
    \caption{\fnrev{Top-$1$ search using vanilla, fuzzy, and semantic overlap. Element similarity in  fuzzy overlap is the Jaccard of $3$-grams.}}
\label{fig:sym-vs-sem}
\end{figure*}

\begin{example} \label{ex:semvsfuzzy}
\nikolausrev{Continuing Ex.\,\ref{ex:fuzzy}, we perform a top-$1$ search for $Q$ in $\mathcal{L}=\{C_1,C_2\}$ using semantic overlap. Fig.\,1 (on the right) shows the semantic similarities of set elements with a minimum of 0.7 (dashed lines) and the subgraph that maximizes the one-to-one matching (solid lines) and defines the semantic overlap; the matching is optional since not all elements are matched. In addition to elements that are identical or similar at the character level ({\tt\small \camera{Blain}}, {\tt\small Blaine}),  semantically related elements  (e.g.,  {\tt\small Charleston} and {\tt\small SC}) contribute to the semantic overlap; {\tt\small Appleton}, despite its character-level resemblance to {\tt\small BigApple}, does not contribute since it is semantically unrelated. $C_2$ is ranked top-$1$ as expected. Note that a greedy matching approach that matches edges in descending weight order is not optimal and will fail to rank $C_2$ above $C_1$.  }   
\eat{
\fnrev{Continuing with Example~\ref{ex:fuzzy}, suppose a semantic similarity function is used to assign  scores to pairs of elements, as indicated on the edges of graphs in the third column of Fig.~\ref{fig:sym-vs-sem}. Edges with small similarity values are eliminated from the graphs. For example, {\tt \small Blaine} and {\tt \small Blain} are assigned the similarity $0.99$ in both sets and {\tt \small BigApple} and {\tt \small NewYorkCity} are assigned the similarity $0.9$ in $C_2$, while there is no edge between {\tt \small BigApple} and {\tt \small Appleton} due to their lack of semantic similarity. A search based on semantic overlap returns $C_2$ as the top-$1$ result. Edges participating in maximum matching are indicated by solid lines. Note that a greedy bipartite matching algorithm, that approximates semantic overlap by  matching each element to its corresponding non-matched element or none, would estimate the semantic overlap of $4.09$ for $Q$ and $C_1$  and $3.74$ for $Q$ and $C_2$, resulting in top-$1$ becoming $C_1$.}    
}
\end{example}

\fnrev{Semantic overlap lends itself to a wide variety of tasks. For example, in the presence of dirty data and data generated by different standards, \camera{formattings}, and organizations, semantic overlap search can assist with joinable dataset search. Vanilla overlap  has  been extensively studied for table join  search~\cite{ZhuDNM19,FernandezMNM19,ZhuNPM16}.  The notion of semantic join has been explored in SEMA-JOIN~\cite{HeGC15}, were given two joinable columns the goal is to find an optimal way of mapping  values 
in two columns by leveraging the statistical correlation, obtained from a large table corpus, between cell values at both row-level and column-level.   
In addition to discovering joinable columns\camera{,} which is not the focus of SEMA-JOIN, \algo enables finding an optimal way of mapping cell values based on their semantic similarity, when a large corpus of data on cell value mapping  is not available.}

\eat{While vanilla overlap search  has been extensively used for table join  search~\cite{ZhuDNM19,FernandezMNM19,ZhuNPM16}, given the prevalence of web data and open data, generated by different standards, \camera{formattings}, and organizations, the semantic set similarity can be helpful in finding  datasets that are joinable yet not discoverable using traditional (equi-join) ways of joining column values. With respect to applying semantics, existing dataset discovery techniques have only leveraged semantic notions of   similarity for  column names (if existent)~\cite{FernandezAKYMS18} or the aggregated  embedding representation of column values~\cite{NargesianZPM18,FernandezMQEIMO18}. These techniques neither guarantee joinability nor provide the correct matching between column values for joining.} 

\eat{
\fnrev{In addition, this generic definition of overlap lends itself to a wide variety of real-world problems that would  be infeasible with vanilla set overlap.} \pranayrev{In the case of document search, semantic search has been used by augmenting the bag of words of a document with the synonym of document tokens~\cite{?}. Semantic overlap search allows us to efficiently find scientific documents  and research articles by finding more casual matches of technical words, which aid in the discovery of semantically comparable texts. For example, it is usual to use {\tt \small neural networks} and {\tt \small models} interchangeably, but a pure syntactic set search would not regard this pair as a match~\cite{chen2006similarity}. In a supply and demand matching scenario, semantic overlap search would allow for the discovery of suppliers who may be fulfilling demands at the time of emergency, resulting in improved resource allocation~\cite{boysen2019matching}. While using fuzzy techniques can match  product names with typos and small dissimilarities, the semantic overlap search allows matching  names more flexibly, and the one-to-one matching would ensure that there is not a case of short supply due to matching one item  to  multiple items.} }

\eat{\fnrev{There are several challenges to address the semantic set overlap  search problem, which we solve with a novel  filter-verification framework. First, computing semantic overlap is tied to computing the bipartite graph matching of two sets which has the  complexity $\mathcal{O}(n^3)$, where $n$ is the cardinality of sets, and quickly becomes expensive for large sets~\cite{EdmondsK72}. Note that greedy matching which has lower complexity does not consistently achieve optimal top-$k$ results. Second, the sheer number of sets in repositories calls for aggressive  filters to eliminate sets with no potential. The  refinement  phase of \algo avoids  the expensive graph matching of sets at all costs and postpones exact matching to the post-processing phase. Instead, \algo defines bounds for the semantic overlap of sets that are used for filtering and  incrementally and iteratively refines these bounds  during this phase. Third, due to the sheer number of sets and large vocabulary in real repositories, the filters are frequently updated. To alleviate this overhead, \algo supports  efficient-to-update  filters that operate based on the dynamic partitioning of candidate sets. Finally, depending on the distribution of data, many sets may require post-processing.  \algo considers post-processing sets ordered by their potential of being in the top-$k$ results. Moreover, it applies  a specialized filter for the early termination of the graph-matching algorithm, based on the history of  post-processed sets, which  further improves  pruning power.}}
\camera{Several challenges must be addressed to solve the semantic set overlap  search problem. First, computing the semantic overlap requires a bipartite graph matching between two sets, which is expensive and runs in $\mathcal{O}(n^3)$ time for sets with cardinality $n$~\cite{EdmondsK72}. Note that greedy matching, which has lower complexity, does not consistently achieve optimal top-$k$ results. Second, the sheer number of sets in repositories calls for aggressive  filters to eliminate sets with no potential. \algo addresses this issue with a novel filter-verification framework. The  refinement  phase of \algo avoids  the expensive graph matching of sets whenever possible and postpones exact matching to the post-processing phase. \algo defines bounds for the semantic overlap of sets that are used for filtering, and  these bounds are incrementally and iteratively refined. Third, due to the sheer number of sets and large vocabulary in real repositories, the filters are frequently updated. To alleviate this overhead, \algo supports  efficient-to-update  filters that operate based on a dynamic partitioning of candidate sets. Finally, depending on the distribution of data, many sets may require post-processing.  \algo considers post-processing sets ordered by their potential of being in the top-$k$ results. Moreover, it applies  a specialized filter for the early termination of the graph-matching algorithm, based on the history of  post-processed sets, which  further improves the pruning power.}
\pranay{To summarize}, we make the following contributions.  
\squishlist
    \item We propose a new set similarity measure called {\em semantic overlap} that generalizes the vanilla  overlap measure by considering the  semantic similarity of set elements quantified by \eat{arbitrary similarity functions}\camera{a user-defined similarity function}. 
    \item We formulate the top-$k$ set similarity search problem  using semantic overlap and propose a novel  filter-verification framework, called \algo, to address this problem. 
    \item We present powerful and cheap-to-update filters  that aggressively prune sets during both the refinement and post-processing phases. 
    \eat{\item We perform an extensive analysis of the pruning power of filters, response time, and memory footprint on real datasets.  Our experiments show that  \algo is \fn{at least $5.5$x and up to $740$x faster than} baseline that does not use the proposed filters, with a small memory footprint.  \algo performs particularly well for medium to large query sets by pruning more than 95\% of candidate sets.}
    \item \camera{We perform an extensive analysis of the pruning power of filters, response time, and memory footprint on real datasets.  Our experiments show that  \algo has a small memory footprint and is \fn{at least $5.5$x and up to $740$x faster than} a baseline that does not use the proposed filters. \algo performs particularly well for medium to large query sets by pruning more than 95\% of candidate sets.}
\squishend

\section{Problem Definition}
\label{sec:problemdef}

\nikolausrev{We assume sets with pairwise comparable elements,  i.e., the user can define a similarity function for comparing  elements.}  

\begin{definition}[{\bf{\em   Semantic Overlap}}] Given two sets of 
elements $Q$ and $C$ and a similarity threshold \camera{$\alpha > 0$}, 
suppose \fnrev{$M: Q\rightarrow C$ 
is an optional one-to-one matching that determines for each $q_i\in Q$ to be matched to $M(q_i)\in C$ or none. Let $\tsim(.,.)$ be a \nikolausrev{symmetric} similarity function that returns $1$ for identical elements 
and a value in $[0,1]$ for non-identical elements. 
We define 
$\tsim_{\alpha}(x,y) = \tsim(x,y)$, if $\tsim(x,y)\geq\alpha$, otherwise, $0$.} The \emph{semantic set overlap} of $Q$ and $C$ is:  	
\[\mathcal{SO}(Q,C) = \max_M \sum_{q_i\in Q}\tsim_{\alpha}(q_i,M(q_i))\]
\label{def:semoverlap}
\end{definition} 

When set elements are strings, \camera{we refer to set elements as \emph{tokens}}. 
\camera{When clear from the context, we use $\tsim$ for $\tsim_{\alpha}$ and $\mathcal{SO}(C)$ for $\mathcal{SO}(Q,C)$.} 
The semantic overlap \camera{is defined by} the  matching \fnrev{$M : Q \rightarrow C$}
that  maximizes the aggregate pairwise semantic similarity of 
pairs in $M$. 
\pranayrev{The semantic overlap is a symmetric measure.} 
\fnrev{The choice of $\tsim$ 
depends on the context}\nikolausrev{, e.g., if the set tokens are strings,  
the {\em cosine similarity} of their embedding vectors is a common way of comparing elements.} \camera{Other  purely character-based functions include  
the Jaccard similarity of words in  tokens~\cite{DengKMS17} and  the edit distance of tokens~\cite{navarro2001guided,DengKMS17}.}   
\fnrev{
Vanilla overlap is a special case of semantic overlap\eat{, if} \camera{with} $\tsim$ \eat{evaluates}\camera{evaluating} the equality of elements\eat{, i.e.,}\camera{:} $\tsim(q_i,c_j)=1$ if $q_i=c_j$ and $\tsim(q_i,c_j)=0$, otherwise.
\begin{lemma}\label{lemma:vanilla} The vanilla overlap is a lower bound for the semantic overlap\nikolausrev{: $|Q\cap C|\leq\mathcal{SO}(Q,C)$}.
\end{lemma}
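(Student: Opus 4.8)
The plan is to prove the inequality by exhibiting one explicit matching whose objective value equals $|Q\cap C|$. Since $\mathcal{SO}(Q,C)$ is defined as the \emph{maximum} of $\sum_{q_i\in Q}\tsim_{\alpha}(q_i,M(q_i))$ over all optional one-to-one matchings $M$, any single admissible matching immediately yields a lower bound, so it suffices to make a good choice of $M$.

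First I would define the candidate matching $M^\ast$: for every $q_i\in Q\cap C$, let $M^\ast(q_i)$ be that same element regarded as a member of $C$; for every $q_i\in Q\setminus C$, leave $M^\ast(q_i)$ undefined (``none''). I would then verify that $M^\ast$ is a legal optional one-to-one matching from $Q$ to $C$: it is a partial function, and it is injective because distinct elements of $Q\cap C$ are sent to distinct elements of $C$ (sets carry no duplicates), so no two edges of $M^\ast$ share an endpoint.

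Next I would evaluate the objective on $M^\ast$. For $q_i\in Q\cap C$ we have $\tsim(q_i,M^\ast(q_i))=\tsim(q_i,q_i)=1$ by the stipulation that $\tsim$ returns $1$ on identical elements; since the threshold satisfies $\alpha\le 1$, this clears the cutoff and $\tsim_{\alpha}(q_i,M^\ast(q_i))=1$. For $q_i\notin Q\cap C$ the element is unmatched and contributes $0$. Summing over $Q$ gives $\sum_{q_i\in Q}\tsim_{\alpha}(q_i,M^\ast(q_i))=|Q\cap C|$, and therefore $\mathcal{SO}(Q,C)\ge|Q\cap C|$.

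There is essentially no obstacle here; the only point that needs a word of care is the boundary behavior of the threshold. One should note that $\alpha\le 1$ (a degenerate $\alpha>1$ would force every $\tsim_{\alpha}$ to vanish), or, to sidestep even that, observe that the claim is vacuous when $Q\cap C=\emptyset$ since $\mathcal{SO}$ is a sum of non-negative terms. I would also remark that the same argument, specialized to the $\tsim$ used for vanilla overlap ($\tsim=1$ iff the elements are equal), recovers vanilla overlap exactly as the degenerate case, which is consistent with the discussion immediately preceding the lemma.
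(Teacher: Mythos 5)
Your proposal is correct and is essentially the paper's own argument: both construct the matching that pairs each element of $Q\cap C$ with its identical counterpart in $C$, whose similarity sum is $\lvert Q\cap C\rvert$, and invoke the maximality in Definition~\ref{def:semoverlap}. Your extra remark on the threshold boundary ($\alpha\le 1$) is a harmless refinement the paper leaves implicit.
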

\begin{proof}
\eat{\nikolausrev{
Assume $|Q\cap C|>\mathcal{SO}(Q,C)$: We can construct a mapping $M:Q\rightarrow C$ that maps all elements in $|Q \cap C|$ to their identical counterparts. The  similarity sum of this mapping is $|Q \cap C|>\mathcal{SO}(Q,C)$, which contradicts Definition\,\ref{def:semoverlap}.}}
\camera{
$\lvert Q\cap C\rvert>\mathcal{SO}(Q,C)$ is not possible since we can always construct a matching $M:Q\rightarrow C$ that maps all elements in $Q \cap C$ to their identical counterparts. The  similarity sum of this mapping is $\lvert Q \cap C\rvert$.}
\end{proof}}

The semantic overlap of two sets can be formulated as a maximum bipartite graph matching problem. 
For sets $C$ and $Q$, we define a weighted bipartite graph \nikolausrev{$G=(V,E)$, \eat{such that} $V=\{Q, C\}$, where  
 $C$ and $Q$ form disjoint and independent sets of nodes in $G$, and an edge $(q_i,c_j)\in E\subseteq Q\times C$} indicates that $\tsim(q_i,c_j)\neq 0$. The sum of edge weights of a maximum bipartite graph matching is called \camera{the score of the matching.} 
Finding the maximum weighted bipartite graph  matching is known as the  assignment problem and has  time complexity $\mathcal{O}(n^3)$, where $n$ is the cardinality of  sets~\cite{EdmondsK72}. 

\begin{definition}[{\bf{\em Top-k Semantic Overlap  Search}}]
    \label{def:top-k}
     Given a query set $Q$ 
    and a collection of
    sets $\mathcal{L}$,  \nikolausrev{$k\leq|\{C \in \mathcal{L}\mid \mathcal{SO}(Q,C) > 0\}|$}, find a sub-collection $\omega\subseteq \mathcal{L}$ of $k$ distinct sets such that: 
    \begin{enumerate}
	    \item \camera{$\mathcal{SO}(Q,C) > 0 ~\forall C \in \omega$}, and 
    \item \pranay{\camera{$\min\{\mathcal{SO}(Q,C), ~C \in \omega \} \nikolausrev{\geq} \mathcal{SO}(Q,X) ~\forall X \in \mathcal{L}\setminus \omega$}} 
    \end{enumerate}
\end{definition}

\nikolausrev{
Ties are broken arbitrarily so that the result is of size $k$~\cite{IlyasBS08}.} 

\smallskip

\fnrev{Our solution is a filter-verification  algorithm, called \algo.  
In the {\em refinement} phase, \algo identifies candidate sets and  aggressively prunes sets with no potential using cheap-to-apply  filters. During {\em post-processing}, \algo takes pruning to the next level by \nikolausrev{only} partially computing exact matches.} \pranayrev{Fig.~\ref{fig:framework} shows the basic framework of \algo.}

\begin{figure}[tbp]
\centering
    \includegraphics[width=\linewidth]{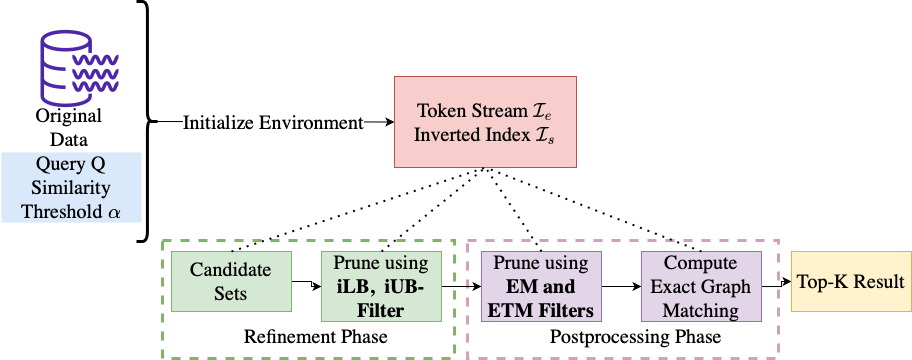}
    \caption{\algo framework.}
\label{fig:framework}
\vspace{-1.5em}
\end{figure}

\eat{\section{\algo : A Two-phase Algorithm}
\label{sec:algorithmv1}

Our solution is  a filter-verification  algorithm.  
The {\em refinement} phase aggressively prunes sets with no potential using cheap-to-apply  filters and the {\em post-processing} takes pruning to the next level by partially computing exact matches. \pranayrev{Figure~\ref{fig:framework} shows the basic framework of \algo.}
}

\eat{
\begin{table}
\tiny
\caption{Table of Notations}
\resizebox{\linewidth}{!}{
\begin{tabular}{ l  l}
\hline
Symbol & Description \\
\hline
$Q$, $C$ & Query set, candidate sets \\
$UB(C)$ & Upper-bound of $\mathcal{SO}$ of candidate set $C$\\
$LB(C)$ & Lower-bound of $\mathcal{SO}$ of candidate set $C$\\
$\mathcal{L}$ & \nikolausrev{Collection of sets}\\
$\mathbb{D}$ & Dictionary of $\mathcal{L}$ \\
$\mathcal{I}_e$ & Token stream \\
$\topklb$ & Top-$k$ LB list\\
$\topkub$ & Top-$k$ UB list\\
$\pqub$ & Priority queue on UBs \\
 \hline
\end{tabular}
}
\label{tbl:notations}
\end{table}
}

\section{Basic Filtering}
\label{sec:basicfilter}

We refer to the list of result sets partially ordered by  semantic overlap  descendingly as {\em top-$k$ result} and denote $\theta^*_k$ as the semantic overlap of the set  with the $k$-th smallest semantic overlap in the top-$k$ result. 
Of course, the value of $\theta^*_k$ is not known until the top-$k$ result \nikolausrev{is computed.}
\eat{Assume we iterate over sets in $\mathcal{L}$ (in an effective way that allows skipping over sets) and maintain a {\em running top-$k$ list} containing the best sets the algorithm has found so far. }
\nikolausrev{Assume an algorithm that iterates over sets in  $\mathcal{L}$  and maintains a {\em running top-$k$ list} containing the best sets found so far:}
The running top-$k$ may contain any $k$-subset of $\mathcal{L}$ with non-zero \camera{semantic overlap}. 
\nikolausrev{We denote the smallest semantic overlap of sets in the running top-$k$ list by  $\theta_k$; by \fnrev{definition, \eat{we have} $\theta_k\leq\theta^*_k$.} }

Based on Def.~\ref{def:semoverlap}, any set $C\in \mathcal{L}$ that contains at least one  element with  similarity higher than $\alpha$ to some  element of $Q$ has a non-zero  semantic overlap and is a candidate set. \fnrev{
Clearly, if $\mathcal{SO}(C)<\theta_k$, set $C$ can be pruned, as there exist at least $k$ sets with better $\mathcal{SO}$ scores. However, since computing $\mathcal{SO}(Q,C)$ is expensive (cubic in set cardinality), our goal during the refinement is to \nikolausrev{prune without computing} the exact matching.  \nikolausrev{To this end,} we compute  lower and upper bounds of the semantic overlap for candidate sets. The bounds help prune sets by comparing against the current $\theta_k$, hence, reducing the number of exact graph matching  calculations.}


\noindent{\bf UB-Filter:} \fnrev{We define an upper bound for semantic overlap.
\begin{lemma} \label{lemma:ub} \nikolausrev{Given query set $Q$ and candidate set $C$, then  \camera{$\mathcal{SO}(C)\leq|Q|\cdot \max_{q_i\in Q, c_j\in C}\{\tsim(q_i,c_j)\} = UB(C)$}.} 
\end{lemma}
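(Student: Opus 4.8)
The plan is to unfold Definition~\ref{def:semoverlap} and bound the contribution of each query element by the single largest pairwise similarity. First I would fix an optimal matching $M^*$ attaining the maximum in the definition, so that $\mathcal{SO}(C) = \sum_{q_i\in Q}\tsim_{\alpha}(q_i,M^*(q_i))$. The sum has exactly $|Q|$ terms (one per $q_i\in Q$), counting a $0$ term whenever $M^*$ leaves $q_i$ unmatched.

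Next I would bound each summand by $\mu := \max_{q_i\in Q,\,c_j\in C}\{\tsim(q_i,c_j)\}$. There are two cases. If $M^*(q_i)=c_j$ for some $c_j\in C$, then $\tsim_{\alpha}(q_i,c_j)\le\tsim(q_i,c_j)\le\mu$, using that thresholding only decreases the value and that $\mu$ is the maximum over \emph{all} pairs, hence over this particular one. If $q_i$ is unmatched, the summand is $0\le\mu$, since $\tsim$ is nonnegative (indeed $\mu\ge\alpha>0$ because $C$ is a candidate set). Summing the per-term bound over the $|Q|$ terms gives $\mathcal{SO}(C)\le|Q|\cdot\mu = UB(C)$.

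I do not anticipate a real obstacle here: the only points requiring a word of care are the optional (partial) nature of $M^*$, handled by the $0\le\mu$ observation above, and the fact that the maximum in $UB(C)$ ranges over all element pairs and therefore dominates whichever pair $M^*$ selects for a given $q_i$. Both follow directly from the definitions, so the argument is a two-line counting bound.
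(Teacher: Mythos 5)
Your proof is correct and follows essentially the same route as the paper's: bound each matched pair's (thresholded) similarity by the global maximum $\max_{q_i\in Q, c_j\in C}\{\tsim(q_i,c_j)\}$ and bound the number of contributing terms by $|Q|$ (the paper phrases this as $|M|\leq \min(|Q|,|C|)\leq|Q|$, while you count $|Q|$ summands with zeros for unmatched elements). Your extra care about the optional matching and the thresholding only decreasing values is a harmless elaboration of the same two-line argument.
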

\begin{proof} The weight of an edge, $(q_i,c_j)\in Q\times C$, in the semantic overlap bipartite graph of sets $Q$ and $C$ is bounded by the maximum similarity between any two elements of the sets. 
\nikolausrev{The size of any matching $M$ between $Q$ and $C$ is bound by $|M|\leq min(|Q|,|C|)\leq |Q|$. \camera{With Def.~\ref{def:semoverlap}, $\mathcal{SO}(C)\leq UB(C)$.}}
\end{proof}
\eat{Let $UB(C) = |Q|\cdot \max_{q_i\in Q, c_i\in C}\{\tsim(q_i,c_j)\}$.} \camera{Since $UB(C)\leq\mathcal{SO}(\nikolausrev{C})$, any set $C$ that satisfies $UB(C)<\theta_k$ can be safely pruned \nikolausrev{with Lemma~\ref{lemma:ub}}.} 
}


\noindent{\bf LB-Filter:} \fnrev{\camera{Given a bipartite graph, \eat{at each iteration,} the greedy matching algorithm  at each iteration includes the edge with the highest weight  between unmatched nodes  until no edge with  unmatched nodes can be found.} The greedy algorithm for maximum matching has complexity \camera{$\mathcal{O}(n^2\cdot \log n)$}, where $n$ is the cardinality of sets. As shown in Ex.~\ref{ex:semvsfuzzy}, the  greedy algorithm does not find the optimal solution. \nikolausrev{However, the score of the greedy matching has been shown  to be at least half of the optimal score~\cite{algobook}.} 
\begin{lemma}\label{lemma:lb} \nikolausrev{Let $Q$ and $C$ be two sets with semantic overlap $\mathcal{SO}(C)$. In the corresponding bipartite graph, let $LB(C)$ be the maximum of  
(a) the  maximum  edge weight, and (b) the greedy matching score. Then, $LB(C)\leq \mathcal{SO}(C)$.}
\end{lemma}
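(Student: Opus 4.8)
The plan is to show that $LB(C)$ is the score of some \emph{feasible} optional one-to-one matching between $Q$ and $C$, from which the claim is immediate: by Def.~\ref{def:semoverlap}, $\mathcal{SO}(C)$ is the \emph{maximum} score over all such matchings, so the score of any particular matching is a lower bound. Concretely, I would verify separately that each of the two quantities whose maximum defines $LB(C)$ — the maximum edge weight (a), and the greedy matching score (b) — equals the score of a feasible matching, and therefore each is $\le \mathcal{SO}(C)$; since both are dominated by $\mathcal{SO}(C)$, so is their maximum.

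For (a): let $(q^*,c^*)$ be an edge of maximum weight in the bipartite graph $G$ associated with $Q$ and $C$ (recall $G$ contains only edges with $\tsim_\alpha \neq 0$, so this weight is exactly $\max_{q_i\in Q,\,c_j\in C}\tsim(q_i,c_j)$ whenever that maximum is $\ge \alpha$). The matching $M$ that sends $q^*\mapsto c^*$ and leaves every other element of $Q$ unmatched is a valid optional one-to-one matching whose score is precisely this maximum edge weight; hence (a) $\le \mathcal{SO}(C)$. For (b): the greedy algorithm, by construction, adds an edge only when both of its endpoints are currently unmatched, so the set of edges it returns shares no node and thus is a matching in $G$, i.e., it realizes a feasible optional one-to-one matching $M_{\mathrm{greedy}}:Q\to C$. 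Moreover every edge it picks lies in $G$, hence has weight $\ge\alpha$, so its total weight equals $\sum_{q_i\in Q}\tsim_\alpha\bigl(q_i,M_{\mathrm{greedy}}(q_i)\bigr)$, which is the greedy matching score; therefore (b) $\le \mathcal{SO}(C)$. Combining, $LB(C)=\max\{(a),(b)\}\le\mathcal{SO}(C)$.

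The argument is essentially a feasibility argument, so there is no real technical obstacle; the only points requiring a moment of care are (i) confirming that the greedy output is genuinely a matching (no shared endpoints) rather than an arbitrary edge set, and (ii) noting that all greedy-selected edges have weight at least $\alpha$, so the greedy matching score coincides with the $\tsim_\alpha$-sum appearing in Def.~\ref{def:semoverlap}. I would also remark that the $\tfrac12$-approximation guarantee for greedy matching mentioned just before the lemma is \emph{not} needed here — that bound concerns how tight $LB(C)$ is as an estimate of $\mathcal{SO}(C)$, whereas the inequality $LB(C)\le\mathcal{SO}(C)$ follows purely from optimality of $\mathcal{SO}(C)$ over feasible matchings. (One may additionally observe that greedy's first choice is always the globally maximum-weight edge, so in fact (b) $\ge$ (a); this makes the ``maximum edge weight'' term redundant for the bound, but it is harmless and convenient to keep.)
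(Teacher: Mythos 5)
Your proof is correct and follows essentially the same route as the paper: for (a) you exhibit the single-edge matching consisting of the maximum-weight edge, and for (b) you use that the greedy output is itself a feasible (optional one-to-one) matching, so both quantities are dominated by the optimum $\mathcal{SO}(C)$ from Def.~\ref{def:semoverlap}. Your extra remarks (that the $\tfrac12$-approximation guarantee is not needed, and that the greedy score already dominates the maximum edge weight) are accurate but not required.
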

\begin{proof} 
\eat{
The optimal 
semantic overlap bipartite graph matching of two sets contains at least the 
edge with the maximum weight. }
\nikolausrev{
(a) We can always construct a one-to-one matching $M=\{(q_i, c_j)\}$ that consists of the edge $(q_i, c_j)$ with maximum weight. (b) Since the greedy matching of a bipartite graph is a lower bound of the optimal matching, $\mathcal{SO}(C)$ is lower-bounded by the greedy matching score.}   
\end{proof}
}  

If we knew the value of $\theta^*_k$\camera{,} we could do the maximum pruning during the refinement step. 
\eat{Choosing a $\theta_k$ that is close to the absolute value of $\theta_k^*$ helps with pruning more sets using the UB-Filter.}\camera{Initializing $\theta_k$ with a value close to $\theta_k^*$ will improve the prunig power of the UB-Filter.} One way is to initialize the top-$k$ list by computing the $\mathcal{SO}$ \camera{of a sample of sets} and picking the top-$k$ ones. In this case, the gained pruning power of the UB-Filter comes at the cost of graph matching \camera{calculations}. To avoid this cost, we  leverage a lower bound of $\theta_k$ for pruning.  
\fnrev{\begin{lemma}\label{lemma:thetalb} \camera{Let $\mathcal{R}$ be the running top-$k$ list and $\theta_{lb}$ the 
smallest lower bound $LB(C)$ for  $C\in\mathcal{R}$. Then, $\theta_{lb}\leq\theta^*_k$.} 
\end{lemma}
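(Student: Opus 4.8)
The plan is to establish the single chain of inequalities
\[
\theta_{lb} \;\le\; \theta_k \;\le\; \theta^*_k ,
\]
where $\theta_k$ is the smallest semantic overlap among the sets currently in the running top-$k$ list $\mathcal{R}$. The right-hand inequality $\theta_k \le \theta^*_k$ is already available: it is the property noted immediately after the running top-$k$ list is introduced (``$\theta_k \le \theta^*_k$ by definition''), which holds because $\mathcal{R}$ is a $k$-subset of $\mathcal{L}$ with non-zero semantic overlaps, so its minimum overlap cannot exceed the $k$-th largest overlap in $\mathcal{L}$. I would simply cite this, noting that it does not require $\mathcal{R}$ to be the ``true'' top-$k$, only a valid running list.

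The real step is the left-hand inequality $\theta_{lb} \le \theta_k$. I would argue it by choosing the right witness and then pushing a pointwise bound through a minimum. Let $C' \in \mathcal{R}$ be a set attaining the minimum semantic overlap in the running list, i.e.\ $\mathcal{SO}(C') = \theta_k$. By Lemma~\ref{lemma:lb}, every candidate set $C$ satisfies $LB(C) \le \mathcal{SO}(C)$; in particular $LB(C') \le \mathcal{SO}(C')$. Since $\theta_{lb} = \min_{C \in \mathcal{R}} LB(C)$ and $C' \in \mathcal{R}$, we have $\theta_{lb} \le LB(C') \le \mathcal{SO}(C') = \theta_k$. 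Combining with $\theta_k \le \theta^*_k$ gives $\theta_{lb} \le \theta^*_k$, as claimed.

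There is no substantial obstacle here; the only things to be careful about are (i) selecting the minimizer of $\mathcal{SO}$ over $\mathcal{R}$ rather than the minimizer of $LB$, so that the inequality $LB \le \mathcal{SO}$ can be applied to the same set that realizes $\theta_k$, and (ii) making sure that $\mathcal{R}$ is non-empty and finite of size $k$ with non-zero overlaps, which is guaranteed by the definition of the running top-$k$ list, so that both $\theta_{lb}$ and $\theta_k$ are well defined. Everything else reduces to the already-proven Lemma~\ref{lemma:lb} and the previously stated bound $\theta_k \le \theta^*_k$.
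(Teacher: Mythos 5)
Your proof is correct and follows essentially the same route as the paper: both establish $\theta_{lb}=\min_{C\in\mathcal{R}}LB(C)\leq\min_{C\in\mathcal{R}}\mathcal{SO}(C)=\theta_k$ via the pointwise bound $LB(C)\leq\mathcal{SO}(C)$ from Lemma~\ref{lemma:lb}, and then chain with $\theta_k\leq\theta^*_k$. Your witness argument just makes explicit the step the paper states directly as an inequality of minima.
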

\begin{proof} \eat{\nikolausrev{By definition,} $\theta_{k,lb}$ is the minimum $LB$ of sets in 
$\mathcal{R}$, and  $\theta_{k}$ is the minimum exact $\mathcal{SO}$ of sets in $\mathcal{R}$.  \nikolausrev{Since $LB(C)\leq\mathcal{SO}(C)$ for any set $C\in\mathcal{L}$, $\theta_{k,lb}=min_{C\in\mathcal{R}}\{LB(C)\}\leq min_{C\in\mathcal{R}}\{\mathcal{SO}(C)\}=\theta_k$ such that   $\theta_{k,lb}\leq\theta_k\leq \theta^*_k$.}}
\camera{By definition, $\theta_{lb}$ is the minimum $LB$ of sets in 
$\mathcal{R}$, and  $\theta_{k}$ is the minimum exact $\mathcal{SO}$ of sets in $\mathcal{R}$. Since $LB(C)\leq\mathcal{SO}(C)$ for any set $C\in\mathcal{L}$, $\theta_{lb}=min_{C\in\mathcal{R}}\{LB(C)\}\leq min_{C\in\mathcal{R}}\{\mathcal{SO}(C)\}=\theta_k$ such that $\theta_{lb}\leq\theta_k\leq \theta^*_k$.}
\end{proof}
}
\fnrev{With Lemma~\ref{lemma:thetalb}, we can safely prune \camera{a candidate set} \eat{sets} if $UB(C)<\theta_{lb}$}.   

\section{Refinement: Candidate Selection}
\label{sec:candselection}

\eat{Similarly, $\theta_{k,ub}$ is the $k$-th largest $UB(C)$, $\forall C\in\mathcal{L}$.  
When clear from the context we refer to $\theta_{k,lb}$ and  $\theta_{k,ub}$ with  $\theta_{lb}$ and $\theta_{ub}$, respectively. 
The algorithm applies $\theta_{ub}$ to prune sets during exact match calculations of the post-processing phase. 
Note that, when we calculate the  semantic overlap of a set, we have $LB(C) = UB(C) = \mathcal{SO}(C)$. Therefore, the exact values of $\mathcal{SO}(Q)$ can contribute to both $\theta_{lb}$ and $\theta_{ub}$. During the post-processing when exact semantic overlaps are calculated, $\theta_{ub}$ gets updated until it converges to the true $\theta^*_k$.  
To evaluate $\theta_{lb}$, we need to maintain a running ordered list of top lower-bounds. The lower-bounds gradually increase  during the refinement phase. Whenever the lower-bound of a set becomes higher than the current $\theta_{lb}$, the list is updated and $\theta_{lb}$ is updated. This results in potentially replacing a set in the running top-$k$  list with a new set. 
On the other hand, the upper bounds of sets gradually decrease  during the refinement phase. Therefore, we need to maintain a data structure that keeps track of the upper bounds of all sets in the candidate collection. 
Upon an update to the upper bound of a set $C$, if $UB(C)$ is larger than the current $\theta_{lb}$, we prune  set $C$ from the upper-bound data structure.  
\begin{figure}[tbp]
\centering
    \includegraphics[width=0.7\linewidth]{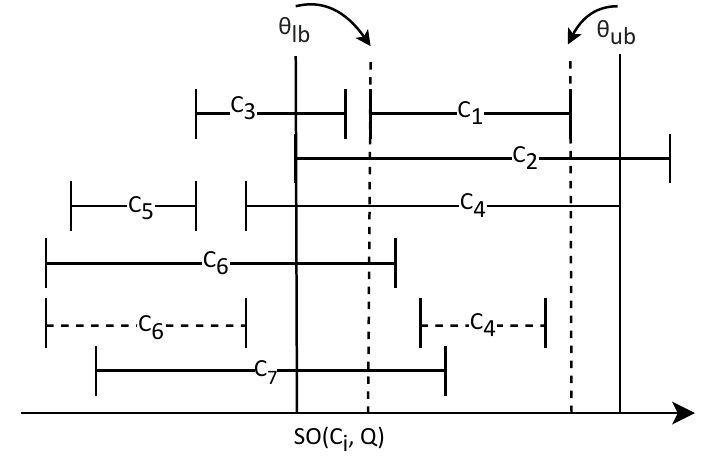}
    \caption{Pruning sets with upper and lower bounds for $k=2$. Each set is represented with  its LB and UB. }
\label{fig:intervals}
\end{figure}

\begin{example}  Suppose sets $C_1, \ldots, C_7$ of  Figure~\ref{fig:intervals} are in the post-processing phase. Each set is represented with an  interval of its lower- and upper-bound. Note that $\theta_{ub}$ is considered for\eat{the purpose of} this example. \algo \eat{\algoplus} works with $\theta_{lb}$ and it is only during the post-processing phase that $\theta_{ub}$ is populated. Suppose we are looking for the top-2 sets with the highest semantic overlap with a query.  Suppose the  value of $\theta_{lb}$ is initially calculated based on the lower-bounds of $C_1$ and $C_2$, since they have the top-2 lower bounds among all sets. As a result, set $C_5$ is pruned because $UB(C_5)<\theta_{lb}$. The remaining sets stay in the candidate collection because they all have lower-bounds smaller than  $\theta_{lb}$ and upper-bounds that are greater than $\theta_{lb}$.  
Suppose, at the next iteration, $LB(C_4)$ and $UB(C_4)$ are  updated. Now, $C_4$ is  the set with the highest lower bounds and as a result, the value of $\theta_{lb}$ is updated to $LB(C_1)$ and the value of $\theta_{ub}$ is updated to $UB(C_1)$. This allows us to safely prune $C_3$ because now  $UB(C_3)<\theta_{lb}$. In the next iteration, if an update to  $UB(C_6)$ happens and $UB(C_6)$ becomes smaller than $\theta_{lb}$, we can prune $C_6$ from the candidate collection. At the end of the refinement phase, the algorithm passes all remaining sets \fnrev{($C_1$, $C_2$, $C_4$, and $C_7$)} to the post-processing phase. \label{ref:intervals1} 
 \end{example}

\begin{figure}[tbp]
\centering
    \includegraphics[width=0.95\linewidth]{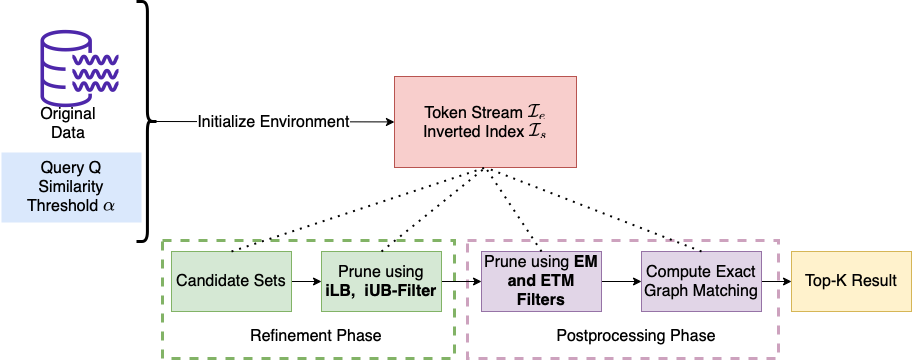}
    \caption{\algo Framework}
\label{fig:framework}
\end{figure}
}


\sloppy{\camera{In the refinement phase, we use two index structures: the token stream $\mathcal{I}_e$ and the inverted index $\mathcal{I}_s$.} Let $\mathbb{D}=\cup_{C_j\in\mathcal{L},c_i\in C_j}~c_i$, be the vocabulary of $\mathcal{L}$. 
\eat{Suppose we have a way of getting a stream of all elements in $\mathbb{D}$ 
ordered in descending 
order of the maximum  similarity to the set of query elements. We call this stream a {\em token stream}, namely $\mathcal{I}_e$.}\camera{The token stream $\mathcal{I}_e$ is a stream of all elements in $\mathbb{D}$ ordered in descending order by the maximum similarity to any query element $q_i \in Q$.} The token stream  \camera{$\mathcal{I}_e$ is a sequence of tuples $(q_i,c_j,\tsim(q_i,c_j))$, where $q_i\in Q$ and $c_j\in\mathbb{D}$.} If $\tsim(q_i,c_j)<\tsim(q_l,c_m)$, 
\eat{element}\camera{tuple} \fnrev{$(q_i,c_j,\tsim(q_i,c_j))$} will \camera{follow}  $(q_l,c_m,\tsim(q_l,c_m))$ \camera{in $\mathcal{I}_e$}. The stream stops when there is no token $c_j\in\mathbb{D}$ \camera{left with $\tsim(q_i,c_j)\geq\alpha, \eat{\forall} q_i\in Q$.}\eat{We will describe how this stream is  created shortly.} 
\eat{An} \camera{The inverted index $\mathcal{I}_s$ \eat{which}maps $c_j\in\mathbb{D}$ to $\{C_1,\ldots,C_m\} \subseteq \mathcal{L}$, such that $\forall~C_i, 1\leq i\leq m : c_j\in C_i$.} 
Upon reading a tuple $(q_i,c_j,\tsim(q_i,c_j))$ from $\mathcal{I}_e$, the index $\mathcal{I}_s$ is probed to obtain all sets containing $c_j$. This creates a stream of sets in $\mathcal{L}$, in descending 
order of the maximum  similarity set elements to \camera{some query element.}\eat{query elements.} Clearly, if a set  appears in this stream it has a non-zero  semantic overlap and is a candidate set. \fnrev{The first time we observe $C$ in this stream, $(q_i,c_j,\tsim(q_i,c_j))$ is in fact $\max\{\tsim(q_i,c_j)\}, c_j\in C, q_i\in Q$. Therefore, based on Lemma~\ref{lemma:ub} and~\ref{lemma:lb}, 
we can initialize  $UB(C) = min(|Q|,|C|)\cdot \tsim(q_i,c_j)$ and $LB(C) = \tsim(q_i,c_j)$. \algo uses the first $k$ sets obtained from  \eat{the}\camera{$\mathcal{I}_e$} to initialize the running top-$k$ list and $\theta_{lb}$. As more sets are obtained, sets with lower $LB(C)$ may be found which results in updating the running top-$k$ list and $\theta_{lb}$.} 
}

Algorithm~\ref{alg:refinement} presents the pseudo-code of the refinement phase of \algo. \eat{In Section~\ref{sec:refinement}, we describe stronger filters based on tighter bounds.} The power of the algorithm is in  postponing exact match calculation to the post-processing phase\camera{,} while pruning sets aggressively in this phase. In each iteration, \algo  reads tuples from  $\mathcal{I}_e$ and updates the bounds of candidate sets. Any set $C$ with \eat{initial}$UB(C)<\theta_{lb}$ is safely pruned. A set $C$ with $LB(C)\leq\theta_{lb}\leq UB(C)$ is in limbo until more elements are read and more evidence about the bounds of the set is collected. The changes could be\camera{:} $UB(C)$ decreases, $LB(C)$ increases, or $\theta_{lb}$ increases. 
\fnrev{\eat{Designing streams}\camera{The indexes} $\mathcal{I}_e$ and $\mathcal{I}_s$ allow us to obtain  candidate sets in the descending order of their initial upper- and lower-bounds. 
\nikolausrev{By processing candidate sets in this order, the search algorithm identifies promising sets early, can update the top-$k$ list, and improve $\theta_{lb}$ to achieve a high pruning ratio.} }

\eat{A candidate set with  $LB(C)\geq\theta_k$ replaces an existing set in the running top-$k$ list. \pranayrev{Note that the greedy matching score used for $LB(C)$ is only an approximation of the true  semantic overlap. To get the exact semantic overlap, we need to evaluate the exact matching of $C$ and $Q$, which is expensive for  large sets.} \eat{A set with $LB(C)\geq\theta_k$ may or may not be in the final top-$k$ list, however, calculating the exact semantic overlap of a set with $LB(C)\geq\theta_k$ has the benefit of increasing the current $\theta_k$, which results in pruning additional sets based on their upper bounds. Because the exact match calculation is expensive, our}\pranayrev{Our algorithm postpones the exact evaluation of semantic overlap until the bounds of all sets are tightened through the token stream, that is no tokens with similarity higher than $\alpha$ to some query token are left in the token stream. Only then, the algorithm starts on the exact calculation of the semantic overlap for sets with $\theta_k\leq UB(Q)$. Algorithm~\ref{alg:refinement} presents the pseudo-code of the refinement phase of the top-$k$ semantic overlap search.}}


\fnrev{\eat{Since based on}\camera{Due to} Lemma~\ref{lemma:ub}, $UB(C)$ is tied to its largest element similarity to \camera{some query element.}\eat{query elements.}} \nikolausrev{To} order sets based on initial $UB$ and $LB$,  we need to retrieve elements in the vocabulary $\mathbb{D}$ in descending order of their similarity to \camera{some query element.}\eat{query elements.} To avoid computing all pairwise similarities between vocabulary and query elements, for a given $\tsim$, any index that enables \camera{efficient threshold-based similarity search}  
is suitable. For example, when $\tsim$ is the cosine similarity  of word embeddings of tokens or the Jaccard of the token set of elements, the Faiss Index~\cite{johnson2019billion} or minhash LSH~\cite{Broder97} can be plugged into the algorithm, respectively. This allows \algo to  perform semantic overlap search independent of the choice of $\tsim$. \nikolausrev{This sets \algo apart from } \fnrev{existing fuzzy set similarity search techniques~\cite{DengKMS17,AgrawalAK10,DengLFL13,WangLDZF15}, which \nikolausrev{rely on} \eat{signatures}\pranayrev{filters} designed for specific  similarity functions.} 

To retrieve   candidate sets in the descending order of their  upper-bounds, a naive solution  is to extend the idea of the inverted index with a map that associates each query element to a list of all elements in the vocabulary. The elements in a list are ordered descendingly according to the similarity of an element to the query element. 
The size of this index grows linearly with the cardinality of $Q$. 
Now, let $r: Q \to \mathbb{A}$ be given by $r_j(q) = t^j$ 
for $j \in {1, \ldots, |\mathbb{D}|}$ that is to say that 
$r_j(q)$ returns the $j$-th most similar element of $\mathbb{D}$ 
to  $q\in Q$. 
Note that the $j$-th most similar element to elements of $Q$ is not necessarily the one with $j$-th highest similarity to any element  in $Q$, 
$\argmax_{t^j} (\tsim(t^j,q_i),\forall i\in [Q])$. For example, $r_{i-1}(q)$ and 
$r_i(q)$ can both have higher similarities than  $r_{1}(q^\prime)$.  
The function $r$ can be realized by one shared index $\mathcal{I}$ over $\mathbb{D}$ and a priority queue $\mathcal{P}$ of size $|Q|$ that keeps track of the most similar elements to elements of $Q$. 
We refer to  $\mathcal{I}$ and $\mathcal{P}$ as  {\em token stream}   $\mathcal{I}_e$. 

Given an element $q$, the index $\mathcal{I}$ returns the next most similar unseen element of $\mathbb{D}$ to $q$.   
In the initial step, we probe $\mathcal{I}$ with all elements in $Q$ and add results to the  priority queue. At this point, $\mathcal{P}$ contains  $|Q|$ elements,  each being the most similar  element to a query element. The queue keeps track of the query element corresponding to each element in $\mathcal{P}$. 
\pranay{The {\em top} of the queue always gives us the mapping of the query element to the unseen element with the highest similarity.}
The second most similar element to $Q$ is already buffered in $\mathcal{P}$.  
To maintain the queue size, when we pop the top element, we only require to probe $\mathcal{I}$ with the query element  corresponding to the popped element because the most similar elements to the rest of the query elements still exist in $\mathcal{P}$.    
Note that when a non-zero $\alpha$ threshold is provided, we stop probing $\mathcal{I}$ when the first  element with a similarity smaller than $\alpha$ is retrieved. 

In addition, we build an  inverted index 
that maps each element in $\mathbb{D}$ to the corresponding sets in $\mathcal{L}$. 
Since $\mathcal{I}_e$ returns elements in the descending order of similarity to $Q$, the new sets that are retrieved from $\mathcal{I}_s$ 
arrive at the descending order of UB-Filter.   
\eat{The space complexity of $\mathcal{I}_e$ is in the size of the vocabulary, $|\mathbb{D}|$,  plus the size of the query,  $|\mathbb{D}|+|Q|$. 
The space complexity of $\mathcal{I}_s$ is in the size of the vocabulary times the number of sets in $\mathcal{L}$, $|\mathbb{D}|\times|\mathcal{L}|$.}  

\eat{
\subsection{{\bf{\em Bounds of $\mathbf{\theta_k}$}}} We know   $\theta^*_k \geq \theta_k$. 
Note that if we knew the true value of $\theta^*_k$ we could do the maximum pruning during the refinement step. 
Because exact matching is expensive, we  apply a lazy approach and postpone calculating $\theta_k$ and refining it to become $\theta^*_k$ to the post-processing phase. 
We would still like to define a lower bound for $\theta_k$ and leverage that for pruning.  
We know that the exact semantic overlap value of a set   is not smaller  
than its $LB(C)$. A set may or may not end up in the final top-$k$ list.  
We define $\theta_{k,lb}$ as the $k$-th largest $LB(C)$, $\forall C\in\mathcal{L}$. 
During the refinement phase, we prune sets based on the $\theta_{k,lb}$, 
i.e. sets with $UB(C)<\theta_{k,lb}$ are safely pruned. 
Similarly, $\theta_{k,ub}$ is the $k$-th largest $UB(C)$, $\forall C\in\mathcal{L}$.  
When clear from the context we refer to $\theta_{k,lb}$ and  $\theta_{k,ub}$ with  $\theta_{lb}$ and $\theta_{ub}$, respectively. 
The algorithm applies $\theta_{ub}$ to prune sets during exact match calculations of the post-processing phase. 
Note that, when we calculate the  semantic overlap of a set, we have $LB(C) = UB(C) = \mathcal{SO}(C)$. Therefore, the exact values of $\mathcal{SO}(Q)$ can contribute to both $\theta_{lb}$ and $\theta_{ub}$. During the post-processing when exact semantic overlaps are calculated, $\theta_{ub}$ gets updated until it converges to the true $\theta^*_k$.  

To evaluate $\theta_{lb}$, we need to maintain a running ordered list of top lower-bounds. The lower-bounds gradually increase  during the refinement phase. Whenever the lower-bound of a set becomes higher than the current $\theta_{lb}$, the list is updated and $\theta_{lb}$ is updated. This results in potentially replacing a set in the running top-$k$  list with a new set. 

On the other hand, the upper bounds of sets gradually decrease  during the refinement phase. Therefore, we need to maintain a data structure that keeps track of the upper bounds of all sets in the candidate collection. 
Upon an update to the upper bound of a set $C$, if $UB(C)$ is larger than the current $\theta_{lb}$, we prune  set $C$ from the upper-bound data structure.  
\begin{figure}[tbp]
\centering
    \includegraphics[width=0.7\linewidth]{figs/preintervals.pdf}
    \caption{Pruning sets with upper and lower bounds for $k=2$. Each set is represented with  its LB and UB. }
\label{fig:intervals}
\end{figure}

\begin{example}  \fnrev{Suppose sets $C_1, \ldots, C_7$ of  Figure~\ref{fig:intervals} are in the post-processing phase.} Each set is represented with an  interval of its lower- and upper-bound. Note that $\theta_{ub}$ is considered for\eat{the purpose of} this example. \algo \eat{\algoplus} works with $\theta_{lb}$ and it is only during the post-processing phase that $\theta_{ub}$ is populated. Suppose we are looking for the top-2 sets with the highest semantic overlap with a query.  Suppose the  value of $\theta_{lb}$ is initially calculated based on the lower-bounds of $C_1$ and $C_2$, since they have the top-2 lower bounds among all sets. As a result, set $C_5$ is pruned because $UB(C_5)<\theta_{lb}$. The remaining sets stay in the candidate collection because they all have lower-bounds smaller than  $\theta_{lb}$ and upper-bounds that are greater than $\theta_{lb}$.  
Suppose, at the next iteration, $LB(C_4)$ and $UB(C_4)$ are  updated. Now, $C_4$ is  the set with the highest lower bounds and as a result, the value of $\theta_{lb}$ is updated to $LB(C_1)$ and the value of $\theta_{ub}$ is updated to $UB(C_1)$. This allows us to safely prune $C_3$ because now  $UB(C_3)<\theta_{lb}$. In the next iteration, if an update to  $UB(C_6)$ happens and $UB(C_6)$ becomes smaller than $\theta_{lb}$, we can prune $C_6$ from the candidate collection. At the end of the refinement phase, the algorithm passes all remaining sets \fnrev{($C_1$, $C_2$, $C_4$, and $C_7$)} to the post-processing phase. \label{ref:intervals1} 
 \end{example}
 }

\setlength{\textfloatsep}{0.1cm}
\begin{algorithm}
\caption{\algo(\sorefine)}
\begin{footnotesize}
\begin{algorithmic}[1]
    \renewcommand{\algorithmicrequire}{\textbf{Input:}}
    \renewcommand{\algorithmicensure}{\textbf{Output:}}
    \Require $\mathcal{L}$: a repository, $\mathcal{I}_{e}$: \eat{a similarity index on tokens of $\mathcal{L}$}\camera{token stream}, $\mathcal{I}_{s}$: an inverted index on sets in $\mathcal{L}$, $Q$: a query set, $k$: search parameter, $\alpha$: token similarity threshold
    \Ensure $\mathcal{U}$: \eat{unpruned}\camera{candidate} sets
    \State $\mathcal{U}\gets\emptyset$, $\mathcal{P}\gets\emptyset$ // candidate sets, pruned sets
	\State $sim\gets 1$ , $\topklb\gets\emptyset$ // initial similarity, top-$k$ LB list
    \While{$sim \geq \alpha$}
	    \State $t,sim\leftarrow{\tt get\_next\_similar\_token(Q, \mathcal{I}_{e})}$
	    \If{$sim \geq \topklb.{\tt bottom}()$} $Cs\leftarrow{\tt get\_sets(t, \mathcal{I}_{s})}$
            \For{$C\in Cs$ and $C\notin\mathcal{U}$ and $C\notin\mathcal{P}$} $\mathcal{U}.{\tt add}$(C)
        	\EndFor
		\EndIf
		\For{$C\in\mathcal{U}$}
    		\State ${\tt UB}(C).{\tt update}(t, sim)$
		    \If{${\tt UB}(C)<\topklb.{\tt bottom}()$}
    			\State $\mathcal{U}.{\tt remove}(C)$, $\mathcal{P}.{\tt add}(C)$
    	    \EndIf
    		\State	$LB(C).{\tt update}(t,~sim)$
		    \If{$LB(C) > \topklb.{\tt bottom}()$} $\topklb.{\tt update}(C)$
		    \EndIf
		\EndFor
	\EndWhile
    \Return $\mathcal{U}$ 
\end{algorithmic}
\end{footnotesize}
\label{alg:refinement}
\end{algorithm}
\setlength{\floatsep}{0.1cm}

\eat{
\subsection{{\bf{\em Ordering Candidate Sets}}}
\label{sec:index}
\eat{\fn{R3-O7: add "to avoid  computing similarity between the cross-product of the query tokens and all tokens in the database, ..."}\pranay{addressed}}
\eat{We start by presenting  a technique for retrieving tokens in the vocabulary $\mathbb{D}$ in the descending order of their similarity to query tokens. \pranay{To avoid computing all pairwise similarities between the query tokens and all tokens in the database, we use the Faiss Index, which also provides for a threshold-based search~\cite{johnson2019billion}.}}
We start by presenting a technique for retrieving tokens in the vocabulary $\mathbb{D}$ in descending order of their similarity to query tokens. To avoid computing all pairwise similarities between vocabulary and query tokens, for a given $\tsim$, any index that enables similarity threshold-based search in sub-linear time, i.e., get all tokens in $\mathbb{D}$ with similarity to the query tokens greater than the threshold, is suitable. For example, when $\tsim$ is the cosine similarity  of word embeddings of tokens or the Jaccard of token set of  elements, the Faiss Index~\cite{johnson2019billion} or minhash LSH~\cite{Broder97} can be plugged in the algorithm, respectively. This allows \algo to  perform semantic overlap search independent of the choice of $\tsim$ which  is unlike the existing fuzzy set similarity search techniques~\cite{DengKMS17,AgrawalAK10,DengLFL13,WangLDZF15} which are designed for specific token similarity functions. 

Building upon this technique and using an inverted index that maps each token to the sets that contain it, we present a way of retrieving  candidate sets in the descending order of their initial upper bounds. 
Processing candidate sets in this order enable the search to identify sets that potentially make it to the top-$k$ list early during the search, improve $\theta_{lb}$,  and achieve a high pruning ratio.

A naive solution to retrieve sets in the descending order of their initial upper bounds (UB-Filter) is to extend the idea of the inverted index with a map that associates each query token to a list of all tokens in the vocabulary. The tokens in a list are ordered descendingly according to the similarity of a token to the query token. Formally,  a token $q\in Q$ is mapped to set 
$\mathbb{A} = \{t^1, \ldots, t^{|\mathbb{D}|}\}$, i.e. the tokens in $\mathbb{D}$ are labeled such that 
$
\tsim(t^1,q) \geq \tsim(t^2,q) \geq \cdots \geq \tsim(t^{|\mathbb{D}|},q)$.  

The size of this index grows linearly with the cardinality of $Q$. 
Now, let $r: Q \to \mathbb{A}$ be given by $r_j(q) = t^j$ 
for $j \in {1, \ldots, |\mathbb{D}|}$ that is to say that 
$r_j(q)$ returns the $j$-th most similar token of $\mathbb{D}$ 
to token $q\in Q$. 
Note that the $j$-th most similar token to tokens of $Q$ is not necessarily the token with $j$-th highest similarity to any token in $Q$, 
$\argmax_{t^j} (\tsim(t^j,q_i),\forall i\in [Q])$. For example, $r_{i-1}(q)$ and 
$r_i(q)$ can both have higher similarities than  $r_{1}(q^\prime)$.  
The function $r$ can be realized by one shared index $\mathcal{I}$ over $\mathbb{D}$ and a priority queue $\mathcal{P}$ of size $|Q|$ that keeps track of the most similar tokens to tokens of $Q$.  In Algorithm~\ref{alg:refinement}, we refer to  $\mathcal{I}$ and $\mathcal{P}$ with {\em token stream}   $\mathcal{I}_e$. 
Given a token $q$, the index $\mathcal{I}$ returns the next most similar unseen token of $\mathbb{D}$ to $q$.   
In the initial step, we probe $\mathcal{I}$ with all tokens in $Q$ and add results to the  priority queue. At this point, $\mathcal{P}$ contains  $|Q|$ tokens each is the most similar  token to a query token. The queue keeps track of the query token corresponding to each token in $\mathcal{P}$. 
\pranay{The {\em top} of the queue always gives us the mapping of the query token to the unseen token with the highest similarity.}
The second most similar token to $Q$ is already buffered in $\mathcal{P}$.  
To maintain the size of the queue, when we pop the top token, we only require to probe $\mathcal{I}$ with the query token corresponding to the popped token because the most similar tokens to the rest of the query tokens still exist in $\mathcal{P}$.    
Note that when a non-zero $\alpha$ threshold is provided, we stop probing $\mathcal{I}$ when the first  token with a similarity smaller than $\alpha$ is retrieved. 

In addition, we build an\eat{classic} inverted index $\mathcal{I}_s: \mathbb{D} \to \mathbb{P}(\mathcal{L})$ that maps each token in $\mathbb{D}$ to the sets in $\mathcal{L}$ that contain the token. 
Note that the stream of sets from $\mathcal{I}_s$ contains duplicate sets, however, 
since $\mathcal{I}_e$ returns tokens in the descending order of similarity to $Q$, the sets that are retrieved from $\mathcal{I}_s$, if have not been seen before, arrive at the descending order of UB-Filter.   
The space complexity of $\mathcal{I}_e$ is in the size of the vocabulary, $|\mathbb{D}|$,  plus the size of the query,  $|\mathbb{D}|+|Q|$. 
The space complexity of $\mathcal{I}_s$ is in the size of the vocabulary times the number of sets in $\mathcal{L}$, $|\mathbb{D}|\times|\mathcal{L}|$.  
Next, we describe how the tokens retrieved by  $\mathcal{I}_e$ help  with refining the bounds of all sets. 
}

\section{Refinement: Advanced Filters}
\label{sec:refinement}

If a set obtained from the index is not pruned using the UB-Filter, it is added to the candidate collection. 
We  \camera{introduce two advanced bounds, $iUB(C_i)$ and $iLB(C_i)$, and} describe how we incrementally update \camera{these} bounds.\eat{, namely  $iUB(C_i)$ and $iLB(C_i)$.} \fnrev{\camera{$iUB$ and $iLB$} are based on the  partial bipartite greedy matching  of sets. 
For an element in a candidate set, \eat{$\mathcal{I}_s$}\camera{$\mathcal{I}_e$} may provide multiple matching elements (all edges in a bipartite graph). However, we only consider valid edges, i.e., those matching  unmatched elements.} 

\eat{\begin{figure}[htbp]
\centering
    \includegraphics[width=0.7\linewidth]{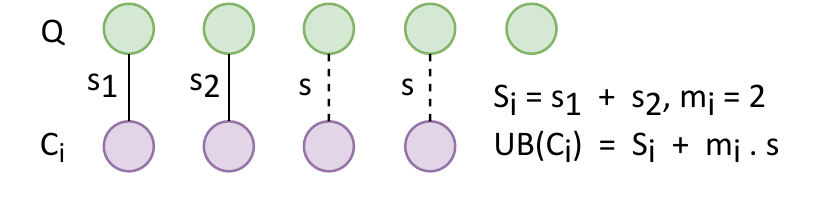}
    \caption{Updating an upper bound upon the retrieval of an element.}
\label{fig:bounds}
\end{figure}}

\noindent{\bf iLB:} \fnrev{In $iLB$, we assume all edges that are not a part of a partial greedy matching have  similarity zero. 
\begin{lemma}\label{lemma:ilb} Given\eat{a} candidate set $C$ and\eat{a} query $Q$, the sum of weights of any subset of edges in a  bipartite greedy matching of $Q$ and $C$ is a lower-bound for  \camera{$\mathcal{SO}(C)$.}\eat{of $Q$ and $C$.}  
\end{lemma}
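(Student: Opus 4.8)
The plan is to reduce the statement directly to Definition~\ref{def:semoverlap}. Let $M_g$ denote the greedy matching of the bipartite graph $G=(V,E)$ associated with $Q$ and $C$, and let $S\subseteq M_g$ be an arbitrary subset of its edges. First I would argue that $S$ induces an optional one-to-one matching $M_S:Q\rightarrow C$ in the sense of Definition~\ref{def:semoverlap}: since $M_g$ is a matching, no two of its edges share an endpoint, and this is inherited by any sub-collection $S$; hence setting $M_S(q_i)=c_j$ whenever $(q_i,c_j)\in S$ and leaving $q_i$ unmatched otherwise is well defined and one-to-one.

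Second, I would observe that every edge $(q_i,c_j)\in E$ satisfies $\tsim(q_i,c_j)\geq\alpha>0$ by construction of $G$, so $\tsim_\alpha(q_i,c_j)=\tsim(q_i,c_j)$ on all edges of $M_g$, and in particular on all edges of $S$. Consequently the sum of edge weights of $S$ equals $\sum_{q_i\in Q}\tsim_\alpha(q_i,M_S(q_i))$, the unmatched query elements contributing $0$.

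Finally, since by Definition~\ref{def:semoverlap} $\mathcal{SO}(Q,C)$ is the maximum of $\sum_{q_i\in Q}\tsim_\alpha(q_i,M(q_i))$ over all optional one-to-one matchings $M$, instantiating $M=M_S$ yields $\sum_{(q_i,c_j)\in S}\tsim(q_i,c_j)=\sum_{q_i\in Q}\tsim_\alpha(q_i,M_S(q_i))\leq\mathcal{SO}(Q,C)$, which is exactly the claim. There is no substantive obstacle: the only point requiring care is to check that restricting $M_g$ to a subset $S$ of its edges cannot produce a weight below $\alpha$ that would be zeroed by $\tsim_\alpha$ — it cannot, because we never introduce new edges, and every edge of the greedy matching already lies in $G$ and hence clears the threshold. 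I would also note that the argument never uses greediness, so the lemma in fact holds for any matching of $Q$ and $C$; greediness is only relevant elsewhere (for the $\tfrac12$-approximation guarantee invoked in Lemma~\ref{lemma:lb}).
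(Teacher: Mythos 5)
Your proof is correct, but it follows a different route than the paper. The paper's argument is a two-step chain: because edge weights are non-negative, the score $s''$ of any subset $E''\subseteq E'$ of the greedy matching's edges satisfies $s''\leq s'$, and then Lemma~\ref{lemma:lb} is invoked to get $s'\leq\mathcal{SO}(C)$, hence $s''\leq s'\leq\mathcal{SO}(C)$. You instead never compare against the full greedy score: you observe that any subset $S$ of a matching is itself a matching, hence induces a feasible optional one-to-one matching $M_S$ in the sense of Definition~\ref{def:semoverlap}, and since $\mathcal{SO}(Q,C)$ is defined as a maximum over such matchings, the weight of $S$ is immediately a lower bound. Your route is slightly more elementary and more general --- as you note, greediness is never used, so the claim holds for any subset of any matching, and it does not depend on Lemma~\ref{lemma:lb} at all (in particular it does not even implicitly need the non-negativity/monotonicity step the paper uses); it also makes explicit the point, left implicit in the paper, that edges of the graph clear the $\alpha$ threshold so $\tsim_\alpha$ does not zero out any edge of $S$. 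What the paper's version buys is modularity: it reuses the already-established Lemma~\ref{lemma:lb} rather than re-deriving feasibility from the definition. Both arguments are sound.
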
 
\begin{proof} \eat{Suppose $M^\prime: Q\rightarrow C$  is the result of greedy matching  on the bipartite graph  $G^\prime=(V,E^\prime)$. 
Let  $s^\prime$ be the score of $M^\prime$.} 
\camera{Let $G^\prime=(V,E^\prime)$ be the bipartite graph of the greedy matching $M^\prime: Q\rightarrow C$ with score $s^\prime$.} Suppose $M^{\prime\prime}: Q\rightarrow C$ is a matching \eat{on}\camera{with} $G^{\prime\prime}=(V,E^{\prime\prime})$, where $E^{\prime\prime}\subseteq E^\prime$; 
let $s^{\prime\prime}$ be the score of $M^{\prime\prime}$. 
Because $E^{\prime\prime}\subseteq E^\prime$, we have $s^{\prime\prime}\leq s^\prime$. Based on Lemma~\ref{lemma:lb},  $s^\prime$ is  \eat{$LB(C)$}\camera{a lower bound on $\mathcal{SO}(C)$}, 
therefore, \camera{$s^{\prime\prime} \leq s^\prime \leq \mathcal{SO}(C)$.}\eat{we have the weight of any partial greedy matching as $LB(C)$.}  
\end{proof}
}

\fnrev{Suppose $iLB_l(C)$ is the current  score of the partial greedy matching of $C$.  Upon reading an unmatched element of $C$ with similarity $s_{l+1}$ to an unmatched query element, using Lemma~\ref{lemma:ilb},  
the lower-bound is updated to $iLB_{l+1}(C) = iLB_{l}(C) + s_{l+1}$. Since we obtain the edges of partial greedy matching from  $\mathcal{I}_s$, 
we always consider the partial matching with the highest score, thus, computing the largest $iLB$.} 
It is straightforward to verify if an $s_l$ must be included in the matching by keeping track of the set of elements that have been matched so far  between $Q$ and each $C$. 
\fnrev{Updating the lower bound of a candidate set 
may result in updating the top-$k$ list and $\theta_{lb}$. Since $\theta_{lb}$ is always increasing (otherwise, we would not  update), updating $\theta_{lb}$ results in pruning more sets based on their upper-bounds.} 

Since set $C$ may contain identical elements to  query elements, i.e., $C\cap Q\neq \emptyset$ and 
the index returns elements in decreasing order, the lower-bound of $C$ is reduced to the number of  its overlapping elements with the query, $|Q\cap C|$, plus the greedy matching score of the remaining elements of $Q$ and $C$.  
Because lower-bounds update $\theta_{lb}$,  and a tighter $\theta_{lb}$ improves the pruning power of our technique, we choose to initialize the lower-bound of a set to its vanilla overlap (number of identical elements). To do so,  the algorithm always includes a query element itself in the  result of probing $\mathcal{I}_e$ for the first time. 
With this strategy, we deal with out-of-vocabulary elements. For example, if $\tsim$ is the cosine similarity of the embedding vectors of tokens and some tokens are not covered in the embedding corpus, we still consider them in the semantic overlap calculation if the query contains the same tokens.  

\noindent{\bf iUB-Filter:} \fnrev{\eat{In}$iUB$ \eat{we assume}\camera{assumes the largest possible similarity for} any edge that is not a part of a partial greedy matching\eat{have the largest similarity}, i.e., the smallest similarity seen so far 
from  \eat{$\mathcal{I}_s$}\camera{$\mathcal{I}_e$}.} 
\fnrev{\begin{lemma}\label{lemma:iub} 
Given a partial greedy matching \camera{of cardinality $l$} of query $Q$ and set $C_i$ with score $S_i$\eat{, upon}\camera{. Upon} obtaining  $(q_m,t,s)$ \camera{from $\mathcal{I}_e$}, where  $q_m\in Q$, element $t$ belongs to any set in $\mathcal{L}$, and $s$ is the next largest element pair similarity,  
we have the upper bound \camera{$iUB(C_i) = S_i + min(|Q|-l, |C_i|-l)\cdot s \geq \mathcal{SO}(C_i)$}. 
\end{lemma}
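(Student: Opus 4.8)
The plan is to recast the claim as a bound on the maximum-weight bipartite matching and then split that matching around the threshold $s$. Write $w(M)=\sum_{(q,c)\in M}\tsim(q,c)$ for the score of a matching. By Definition~\ref{def:semoverlap}, $\mathcal{SO}(C_i)=w(M^{*})$ for some one-to-one matching $M^{*}$ between $Q$ and $C_i$ of maximum score, and trivially $|M^{*}|\le\min(|Q|,|C_i|)$. Decompose $M^{*}=H\cup L$, where $H$ collects the edges of $M^{*}$ of weight $>s$ and $L=M^{*}\setminus H$. The first structural fact I would establish is that, because $\mathcal{I}_e$ emits element pairs in non-increasing similarity order and the tuple just obtained has similarity $s$, the greedy construction underlying $iLB$ has already examined every edge of weight $>s$; hence the current partial greedy matching $M'$ (with $|M'|=l$ and $w(M')=S_i$) is exactly what greedy would build on the subgraph $G_{>s}$ of edges of weight $>s$, and in particular $M'$ is \emph{maximal} there: every edge of weight $>s$ is incident to a vertex matched by $M'$.

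The light part is then routine: each edge of $L$ contributes at most $s$, so once we know $|H|\ge l$ we have $|L|=|M^{*}|-|H|\le\min(|Q|,|C_i|)-l=\min(|Q|-l,|C_i|-l)$ and thus $w(L)\le\min(|Q|-l,|C_i|-l)\cdot s$. Everything therefore reduces to showing $w(H)\le S_i$ together with $|H|\ge l$ — equivalently, that some maximum-weight matching contains $M'$. I would attempt this by an exchange argument: starting from $M^{*}$, pick an edge $e\in H\setminus M'$; by maximality of $M'$ in $G_{>s}$ it shares an endpoint with an edge $e'\in M'$ of weight $w(e')\ge w(e)$; reroute the current matching so that it uses $e'$ in place of $e$ (dropping whatever edge currently occupies the other endpoint of $e'$), arguing that the score does not decrease; iterate until the matching contains all of $M'$, at which point its remaining at most $\min(|Q|-l,|C_i|-l)$ edges must all have weight $\le s$, since a heavier edge would have been forced into $M'$. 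Together with Lemma~\ref{lemma:ilb} (which already gives $S_i\le\mathcal{SO}(C_i)$, so the two bounds sandwich the true value), this establishes $iUB(C_i)=S_i+\min(|Q|-l,|C_i|-l)\cdot s\ge\mathcal{SO}(C_i)$.

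The step I expect to be the real obstacle is making that exchange airtight. Greedy matching is in general only a $\tfrac{1}{2}$-approximation of the optimum — Example~\ref{ex:semvsfuzzy} is built precisely to show greedy can misrank sets — so the argument cannot fall back on the generic half-bound and must genuinely exploit that we reroute only edges of weight strictly above the already-processed value $s$, each blocked by a greedy edge of no smaller weight. An alternative that avoids the alternating-path bookkeeping is to certify the bound by LP duality for bipartite matching: exhibit a fractional vertex cover of value $S_i+\min(|Q|-l,|C_i|-l)\cdot s$ by assigning every vertex matched in $M'$ the weight of its greedy edge and every still-free vertex on the smaller side the value $s$, then verify — again using that $\mathcal{I}_e$ served all heavier edges first — that this assignment covers every edge. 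Checking that covering condition cleanly is where the remaining effort would go.
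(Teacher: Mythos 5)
Your route is not the paper's, and the step you yourself flag as the obstacle is a genuine gap -- in fact it cannot be closed in the form you propose. The reduction you aim for, that $w(H)\le S_i$ with $|H|\ge l$, i.e.\ that some maximum-weight matching extends the partial greedy matching $M'$, is false for greedy matchings. Take $Q=\{q_1,q_2\}$, $C_i=\{c_1,c_2\}$ with $\tsim(q_1,c_1)=1$, $\tsim(q_1,c_2)=\tsim(q_2,c_1)=0.87$, $\tsim(q_2,c_2)=0.5$. The stream emits the three heavy pairs first, greedy retains only $(q_1,c_1)$, so $l=1$, $S_i=1$, while the unique maximum matching is $\{(q_1,c_2),(q_2,c_1)\}$ with score $1.74$: it contains no edge of $M'$, its heavy part $H$ has weight $1.74>S_i$, and your exchange step strictly loses weight (forcing $(q_1,c_1)$ in evicts $(q_1,c_2)$ and leaves only $(q_2,c_2)$, dropping the score to $1.5$). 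The duality fallback hits the same wall: to cover both skipped heavy edges the cover must charge the greedy edge's weight to both of its endpoints, which certifies $2S_i+m_i\cdot s$, not $S_i+m_i\cdot s$. Note, moreover, that once the stream similarity on this instance drops to $s=0.6$ (consistent, since the only unseen pair of $C_i$ has weight $0.5$), we get $S_i+\min(|Q|-l,|C_i|-l)\cdot s=1.6<1.74=\mathcal{SO}(C_i)$, so the obstacle you identified is not a bookkeeping technicality: without an extra hypothesis, already-seen edges of weight above $s$ that greedy skipped genuinely break the argument.

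For comparison, the paper's proof is a short direct accounting rather than an exchange or duality argument: it notes that at most $m_i=\min(|Q|-l,|C_i|-l)$ further elements of $C_i$ can be matched, asserts that after reading $(q_m,t,s)$ any unmatched element pairs with similarity at most $s$ (true only for pairs of two still-unmatched elements, by the non-increasing order of $\mathcal{I}_e$), and books the matched elements at exactly their greedy contribution $S_i$. In other words, it presupposes precisely the extension property you isolated as the crux -- that the optimum may be assumed to agree with $M'$ on the matched vertices -- and never engages with rematching through the edges in your $H\setminus M'$, which is where all the difficulty lies. So your write-up correctly pinpoints what actually has to be shown, but as it stands it is not a proof, and finishing along your lines would require either weakening the bound (e.g., charging each greedy edge twice) or adding assumptions on $\tsim$; iterating the exchange will not do.
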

\begin{proof} Since $l$ elements of $C_i$ have already been matched to  elements of  $Q$, set $C_i$ has maximum \camera{$m_i = min(|Q|-l, |C_i|-l)$} remaining elements to be \camera{matched.}\eat{with $Q$.}  Upon reading an element $t$,  \camera{with the next largest similarity, namely $s$,} 
regardless of which set  $t$ belongs to, 
by  definition,\eat{we have that} any unseen element of\eat{any} $C_i$ has similarity smaller than or equal to $s$.  There are two scenarios: either element $t\in C_i$  or $t\notin C_i$. 
If $t\in C_i$ is already matched with some element in the query, we discard the element and know that any unmatched element has a similarity \eat{less}\camera{no larger} than $s$ to a query element. Otherwise, the upper-bound of any unmatched element in $C_i$  after observing element $t$ can be tightened to $s$ and we have  $iUB(C_i) = S_i + m_i\cdot s$. 
\end{proof}}
Updating $iUB(C_i)$ can result in pruning the set\eat{,} if $S_i + m_i~.~s\leq\theta_{lb}$. 
A naive way is to update the upper-bound of all sets, 
whenever a  new element is retrieved from \camera{$\mathcal{I}_e$}. However, this results in an excessive number of small updates many of which will not sufficiently decrease the upper-bound to prune the updated set. 
To solve this issue, we propose a technique that groups sets into buckets by their number of unseen elements ($m$). Only sets that contain a newly retrieved element require an update and are moved to bucket $m-1$. All other sets need not be updated, but still are pruned as soon as their upper-bound falls below $\theta_{lb}$.  

\algo bucketizes sets 
into $m$ buckets 
$B_m =\{(C_i, S_i) \vert m_i=m\}$. 
Upon the arrival of a new element with similarity $s$, any  set $C_i$ in $B_m$ and a sum of matched elements $S_i$ should be pruned if its updated upper-bound is smaller than $\theta_{lb}$, i.e. $S_i +  m\cdot s \leq \theta_{lb}$. We conclude that if $S_i\leq\theta_{lb} - (m\cdot s)$, we can safely prune $C_i$. Since all sets in a bucket have the same number of remaining elements and $s$ is fixed for all sets, the right-hand side of this pruning inequality is the same for all sets in a bucket. 

We maintain the pairs $(C_i,S_i)$ in a bucket ordered by ascending $S_i$ values. Upon the arrival of an element with similarity $s$, 
we scan the ordered list of sets in each  bucket. 
If a pair satisfies the condition $S_i\leq\theta_{lb} - (m\cdot s)$, we prune the set. As soon as we find a set $S^\prime$ that does not satisfy the condition, we can conclude that the remaining sets  do not satisfy the condition and cannot be pruned, because the remaining sets must have $S_i$ values larger than $S^\prime$. 
Now, suppose a set $C_i$ in bucket $B_m$ contains the newly arrived element: We first remove the pair $(C_i,S_i)$ from its bucket, update $S_i$, and insert the pair into $B_{m-1}$. Restricting updates to the sets that contain an element and  using the element similarity to prune some sets 
saves us many updates at each iteration. Our experiments confirm that maintaining buckets does not incur a large overhead.  

Since the similarity of identical elements is  one even  for identical out-of-vocabulary elements, like $iLB$, \algo  initializes the  upper-bound of a newly obtained set $C_i$ and its  $S_i$ to its vanilla overlap.

\begin{figure}[tbp]
\centering
    \includegraphics[width=\linewidth]{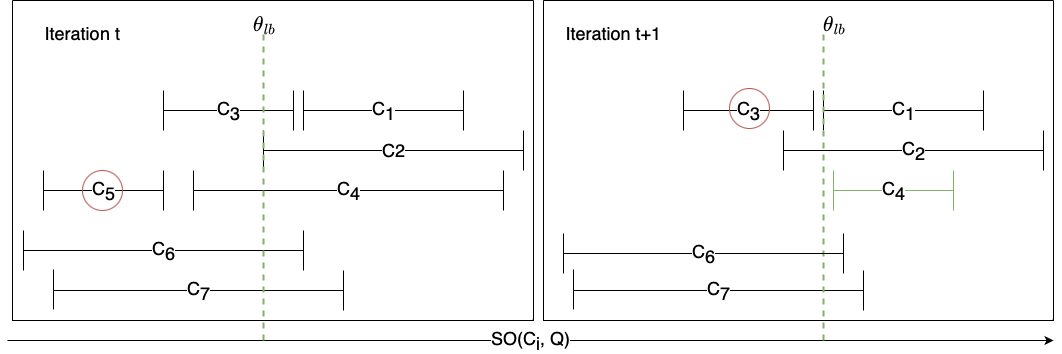}
    \caption{\fnrev{Pruning sets with upper and lower bounds for $k=2$. Each set is represented with  its LB and UB.} \pranayrev{The red circles indicate the pruned sets and the green lines indicate the recent updates to LB and UB.}}
\label{fig:intervals}
\end{figure}

\begin{example}  \pranayrev{Consider sets $C_1, \ldots, C_7$ in Fig.~\ref{fig:intervals}. Suppose we are searching for the top-$2$ sets with the highest semantic overlap with a query. Each set is represented with an  interval of its lower-bound and upper-bound with respect to the query. The  value of $\theta_{lb}$ is initially calculated based on the lower-bounds of $C_1$ and $C_2$, since they have the top-$2$ lower bounds among all sets. This means at the beginning of iteration $t$, set $C_5$ is pruned because $UB(C_5)<\theta_{lb}$.   The remaining sets stay in the candidate collection because they all have lower-bounds smaller than  $\theta_{lb}$ and upper-bounds that are greater than $\theta_{lb}$. 
Suppose, at iteration $t + 1$, by reading an element of $C_4$ from the token stream,  $LB(C_4)$ and $UB(C_4)$ are  updated (the right side of Fig.~\ref{fig:intervals}). Now, $C_4$ is  the set with the highest lower bound and as a result, the value of $\theta_{lb}$ is updated to $LB(C_1)$. This allows us to safely prune $C_3$ because   $UB(C_3)<\theta_{lb}$. 
At the end of the refinement phase, the algorithm passes remaining sets \fnrev{($C_1$, $C_2$, $C_4$, $C_6$, and $C_7$)} to the post-processing phase. \label{ref:intervals1}} 
\end{example}

\eat{
\begin{figure}[htbp]
\centering
    \includegraphics[width=0.7\linewidth]{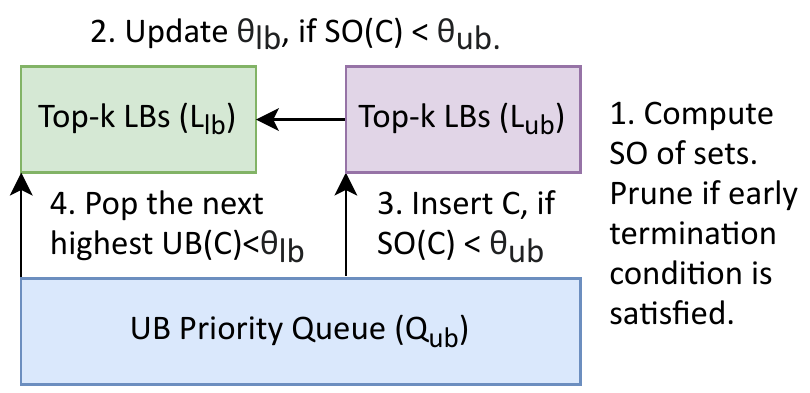}
    \caption{Post-processing steps.}
\label{fig:postprocdiag}
\end{figure}
}
\vspace{-2em}
\section{Post--Processing Phase}
\label{sec:postproc}

All candidate sets that have not been pruned during the refinement need to be verified.   
Some sets 
may end up in this phase due to \camera{their large cardinality} even \eat{if}\camera{though} the  similarity of their elements 
is relatively low. \algo applies filters during post-processing to minimize the number of exact match calculations as well as the time to complete the calculation.

\noindent{\bf No-EM:} \fnrev{
\eat{We define $\theta_{ub}$ to be the the $k$-th smallest $UB(C)$, such that $C\in\mathcal{R}$.}
\camera{Let $\mathcal{U}$ be the candidate sets that have not been pruned, and $\theta_{ub}$  be the $k$-th largest $UB(C)$\eat{,} for $C\in\mathcal{U}$.}  
\begin{lemma}\label{lemma:noem} \eat{A set $C$ with $\theta_{ub}\leq LB(C)$ is guaranteed to be in a  top-$k$ result.}\camera{A set $C \in \mathcal{U}$ with $\theta_{ub}\leq LB(C)$ is guaranteed to be in a top-$k$ result $\omega$.} 
\end{lemma}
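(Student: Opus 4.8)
The plan is to prove that $C$ can be placed into a valid top-$k$ result $\omega$ by bounding the number of sets in $\mathcal{L}$ whose semantic overlap strictly exceeds $\mathcal{SO}(C)$. By Definition~\ref{def:top-k} it suffices to show (i) $\mathcal{SO}(C)>0$ and (ii) $|\{X\in\mathcal{L}\mid\mathcal{SO}(X)>\mathcal{SO}(C)\}|\leq k-1$: given these, we take $\omega$ to consist of $C$, the at-most-$(k-1)$ strictly larger sets, and, if still short of $k$, enough further candidate sets of non-increasing $\mathcal{SO}$; then $\min_{Y\in\omega}\mathcal{SO}(Y)\geq\mathcal{SO}(X)$ for every $X\notin\omega$, so $\omega$ is a valid result containing $C$. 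Part (i) is immediate, since $C\in\mathcal{U}$ is a candidate set and therefore has at least one element pair of similarity $\geq\alpha>0$, whence $\mathcal{SO}(C)\geq\alpha>0$.

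For part (ii) I would chain the bound sandwich $LB(C)\leq\mathcal{SO}(C)\leq UB(C)$ (Lemmas~\ref{lemma:ub} and~\ref{lemma:lb}) with the hypothesis to get $\mathcal{SO}(C)\geq LB(C)\geq\theta_{ub}$. Take any $X$ with $\mathcal{SO}(X)>\mathcal{SO}(C)$; then $UB(X)\geq\mathcal{SO}(X)>\mathcal{SO}(C)\geq\theta_{ub}$. The one step that needs a small argument is $\theta_{lb}\leq\theta_{ub}$: the running lower-bound list $\mathcal{R}$ stays inside $\mathcal{U}$ (a set is pruned only when its $UB$, hence its $LB$, has dropped below $\theta_{lb}$, and $\theta_{lb}$ never decreases, so a pruned set never re-enters $\mathcal{R}$), so $\theta_{lb}$ is the $k$-th largest $LB$ over $\mathcal{U}$ while $\theta_{ub}$ is the $k$-th largest $UB$ over $\mathcal{U}$; since $LB(C_i)\leq UB(C_i)$ for every $C_i\in\mathcal{U}$, the $k$-th largest of the lower bounds is at most the $k$-th largest of the upper bounds, i.e.\ $\theta_{lb}\leq\theta_{ub}$. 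Hence $UB(X)>\theta_{ub}\geq\theta_{lb}$, which forces $X\in\mathcal{U}$ (pruned sets have final $UB<\theta_{lb}$). So every $X$ beating $C$ lies in $\mathcal{U}$ and has $UB(X)>\theta_{ub}$; since $\theta_{ub}$ is the $k$-th largest value in $\{UB(C_i)\mid C_i\in\mathcal{U}\}$, at most $k-1$ members of $\mathcal{U}$ satisfy $UB>\theta_{ub}$, which gives (ii).

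The main obstacle is purely bookkeeping: one must pin $\theta_{lb}$ and $\theta_{ub}$ to their values at the start of post-processing and invoke the monotonicity facts (each $UB(\cdot)$ only decreases and $\theta_{lb}$ only increases during refinement) so that a set pruned at any earlier iteration still has final $UB<\theta_{lb}$; and one must check $\theta_{ub}$ is well-defined, i.e.\ $|\mathcal{U}|\geq k$, which follows because any pruned set $X$ has $\mathcal{SO}(X)\leq UB(X)<\theta_{lb}\leq\theta^*_k$ (Lemma~\ref{lemma:thetalb}) and hence cannot occur in any top-$k$ result, leaving at least $k$ of the (at least $k$) sets with positive overlap inside $\mathcal{U}$. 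Once the order-statistic inequality $\theta_{lb}\leq\theta_{ub}$ is isolated, everything else is the routine sandwich $LB\leq\mathcal{SO}\leq UB$ plus counting how many sets can exceed a $k$-th largest value — no graph matching or heavy computation is needed.
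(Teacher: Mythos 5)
Your proof is correct, but it takes a genuinely different route from the paper's. The paper argues through the single inequality $\theta^*_k \leq \theta_{ub}$: because pruning is safe, every true top-$k$ set remains in $\mathcal{U}$ with $UB \geq \mathcal{SO} \geq \theta^*_k$, so the $k$-th largest upper bound over $\mathcal{U}$ dominates $\theta^*_k$, and then the chain $\theta^*_k \leq \theta_{ub} \leq LB(C) \leq \mathcal{SO}(C)$ immediately places $C$ in a top-$k$ result (the paper also notes that exact-match updates never increase $\theta_{ub}$, so the test stays valid as post-processing proceeds). You never invoke $\theta^*_k \leq \theta_{ub}$; instead you count the competitors of $C$: any $X$ with $\mathcal{SO}(X) > \mathcal{SO}(C)$ has $UB(X) > \theta_{ub} \geq \theta_{lb}$, hence cannot have been pruned and lies in $\mathcal{U}$, and since $\theta_{ub}$ is the $k$-th largest upper bound over $\mathcal{U}$, at most $k-1$ such $X$ exist, so $C$ fits into a valid $\omega$. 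Your auxiliary steps --- the order-statistic inequality $\theta_{lb} \leq \theta_{ub}$ (via $\mathcal{R} \subseteq \mathcal{U}$ and $LB \leq UB$ pointwise), the well-definedness check $\lvert\mathcal{U}\rvert \geq k$, and the monotonicity bookkeeping for pruned sets --- are precisely the facts the paper leaves implicit inside its unproved assertion $\theta^*_k \leq \theta_{ub}$. What your version buys is an explicit, self-contained justification that exposes the reliance on pruning safety and on the bound sandwich $LB \leq \mathcal{SO} \leq UB$; what the paper's version buys is brevity, a three-line chain that reuses the global correctness argument. Both arrive at the same conclusion and treat ties identically (membership in \emph{a} top-$k$ result $\omega$, not a unique one).
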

\begin{proof} \eat{Upon maximum  graph matching calculation for set $C$, we have $\mathcal{SO}(C)=LB(C)=UB(C)$. Therefore, 
a maximum graph matching score never increases  $\theta_{ub}$\eat{,} because $\mathcal{SO}(C)\leq UB(C)$. If a set $C$ in  $\mathcal{L}_{ub}$ satisfies $\theta_{ub}\leq LB(C)$, by $LB(C)\leq UB(C)$ and $\theta^*_{k}\leq \theta_{ub}$, 
it is guaranteed that $\theta^*_{k}\leq\mathcal{SO}(C,Q)$ and $C$ is in the  top-$k$ result.} 
\camera{Computing the maximum graph matching for a $C \in \mathcal{U}$ never increases  $\theta_{ub}$, because $\mathcal{SO}(C)\leq UB(C)$. If a set $C \in \mathcal{U}$ satisfies $\theta_{ub}\leq LB(C)$, by $LB(C)\leq UB(C)$ and $\theta^*_{k}\leq \theta_{ub}$,  
it is guaranteed that $\theta^*_{k}\leq\mathcal{SO}(C)$ and $C$ is in the  top-$k$ result.}
\end{proof}
}
\fnrev{This Lemma allows us to skip  the exact semantic overlap calculation of some sets. Due to ties at distance $\theta_k^*$ from the query there can be multiple solutions for the top-$k$ search problem. All solutions share the same value for $\theta_k^*$.}


\begin{algorithm}[t]
\caption{\algo(\sopost)}
\begin{footnotesize}
\begin{algorithmic}[1]
    \Require $Q$: a query set, $k$: search parameter, $\mathcal{U}$: unpruned sets, $k$: search parameter, $\topklb$: top-$k$ LB list
    \Ensure $\topkub$: top-$k$ results
	\State $\pqub\gets{\tt init\_pq\_UB(\mathcal{U})}$ // priority queue on unpruned sets
	\State $\topkub\gets{\tt init\_topk\_UB(\mathcal{U})}$ // top-$k$ UB list on unpruned sets
    \While{$\neg\topkub.{\tt all\_checked()}$}
        \State $C\leftarrow\topkub.{\tt select\_next\_unchecked}()$
        \If{$LB(C)\geq UB(\topkub.{\tt top}())$}
            \State C.${\tt checked}\leftarrow$True
            \State continue
        \EndIf
        \State SO(C)$\leftarrow${\tt compute\_SO\_early\_termination}(C,Q)
        \State LB(C),UB(C)$\leftarrow$SO(C), C.${\tt checked}\leftarrow$True 
        \If{$SO(C)<\topkub.{\tt bottom}()$}  $\topkub.{\tt remove}(C)$
            \If{$SO(C)\geq\topklb.{\tt bottom}()$} $\pqub.{\tt add}(C)$ 
            \EndIf
        \EndIf
        \If{$SO(C)\geq\topklb.{\tt bottom}()$} $\topklb.{\tt update}(C)$
        \EndIf
        \While{$\topkub.{\tt len}() < k$ and $\neg\pqub.{\tt empty()}$}
        \State $C\gets\pqub.{\tt pop}()$
	    \If{$\topklb.{\tt bottom}()<UB(C)$} $\topkub.{\tt add}(C)$  
        \EndIf
		\EndWhile
	\EndWhile
	\Return $\topkub$
\end{algorithmic}
\end{footnotesize}
\label{alg:postproc}
\end{algorithm}

In this phase, \algo maintains three data structures: 1) an ordered list of sets with top-$k$ lower-bounds ($\mathcal{L}_{lb}$), 2)   an ordered list of sets with top-$k$ upper-bounds ($\mathcal{L}_{ub}$), and 3) a priority queue of sets ordered by upper-bounds ($\pqub$). Maintaining $\mathcal{L}_{lb}$ and $\mathcal{L}_{ub}$ allows us to have fast access to $\theta_{lb}$ and $\theta_{ub}$, respectively.  
Based on Lemma~\ref{lemma:noem}, the algorithm should only compute the bipartite matching of sets with $UB(C)\geq\theta_{ub}$. 
As such, \algo prioritizes the exact graph-matching  calculation of sets with high upper-bounds.  Intuitively, sets with high upper-bounds have the potential for high semantic overlaps. 
To speed up,  all sets in $\mathcal{L}_{ub}$ are queued and evaluated in parallel in the background. Upon the completion of the exact match of a set $C$, we update  $LB(C)=UB(C)=\mathcal{SO}(C)$. This has two effects. 
First, the update of $UB(C)=\mathcal{SO}(C)$ may cause the $\theta_{ub}$ to be 
decreased. As a result, if $\mathcal{SO}(C)\geq\theta_{ub}$, the set remains in $\mathcal{L}_{ub}$. 
If $\mathcal{SO}(C)<\theta_{ub}$, we add the set to $\pqub$, because the algorithm may realize later that $\mathcal{SO}(C)$ was higher than the sets that are currently in $\mathcal{L}_{ub}$, whose semantic overlaps are not calculated yet. 
Inserting a set into $\pqub$ results in $\mathcal{L}_{ub}$ having $k-1$ sets. Probing $\pqub$ provides the next set with the $k$-th largest  upper-bound to be added to $\mathcal{L}_{ub}$. 

Second, the update of $LB(C)$ may cause the $\theta_{lb}$ to increase and potentially prune some sets. The algorithm takes a lazy approach and considers the sets in $\mathcal{L}_{ub}$ for such pruning until a set $C$ with $UB(C)\leq\theta_{lb}$ is obtained.  
The post-processing phase terminates when all sets in the $\mathcal{L}_{ub}$ satisfy the condition $\theta_{ub}\leq LB(C)$ and the list is returned as the final search result. 

\noindent{\bf EM-Early-Terminated-Filter:}  Despite pruning sets extensively, the exact matching calculation remains expensive.  
Consider bipartite graph $G(V,E)$ built on $C$ and $Q$, where $V={Q} \uplus {C}$ and the weight of an edge between elements $q_i\in Q$ and $c_j\in C$ is \camera{$w(q_i,c_j)=\tsim(q_i,c_j)$.} The weight of an optional one-to-one matching \eat{$M\subset E$}\camera{$M\subseteq E$} is $w(M) = \sum_{(q_i,c_j)\in M}w(q_i,c_j)$.   
A matching $M$ is called perfect if for every $v\in V$, there is some $e\in M$ which is incident on $v$. The Hungarian algorithm~\cite{Munkres57} considers a node labeling to be a function $l: V \rightarrow \mathbb{R}$. A feasible labeling satisfies $l(q_i)+l(c_j)\geq w(q_i, c_j), \forall q_i\in Q, c_j\in C$. An equality subgraph is a subgraph $G_l = (V,E_l) \subset G = (V,E)$, fixed on a labeling $l$, such that $E_l = \{(q_i,c_j)\in E: l(q_i)+l(c_j)=w(q_i, c_j)\}$. The Hungarian algorithm considers a valid labeling function $l$ and maintains a matching $M$ and a graph $G_l$. It starts with $M=\emptyset$. At each iteration, the algorithm either augments $M$ or improves the labeling $l\rightarrow l^\prime$\eat{,} until  
$M$ becomes a perfect matching on $G_l$. Let $M^\prime$ be a perfect matching in $G$ (not necessarily in $G_l$). It is a well-known result that if $l$ is feasible and $M$ is a perfect matching in $G_l$, then $M$ is \camera{a} maximum weight matching~\cite{Munkres57}. 
From the proof of the Kuhn-Munkres theorem, we have that $w(M^\prime)\leq w(M)$.  
This  theorem also shows 
that for any feasible labeling $l$ and any matching $M$ we have $w(M)\leq  \sum_{v\in V}l(v)$~\cite{Munkres57}. 
\eat{$w(M^\prime)\leq\sum_{v\in V}l(v)$.}

\begin{lemma}\label{lemma:hungarian} A set $C$ can be safely pruned during bipartite graph matching with $Q$\eat{,} if the sum of labels assigned by the Hungarian algorithm~\cite{Munkres57} is smaller than $\theta_{lb}$. 
\end{lemma}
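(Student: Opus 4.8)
The plan is to show that at every iteration of the Hungarian algorithm the running sum of node labels $\sum_{v\in V}l(v)$ upper-bounds $\mathcal{SO}(Q,C)$, and then to chain this with Lemma~\ref{lemma:thetalb}. First I would recall the standard invariant of the Hungarian algorithm~\cite{Munkres57}: the labeling $l$ is feasible at every step, i.e.\ $l(q_i)+l(c_j)\geq w(q_i,c_j)$ for all $q_i\in Q,c_j\in C$; this holds for the initialization $l(q_i)=\max_{c_j}w(q_i,c_j)$, $l(c_j)=0$ and is preserved by each label-improvement step. The optional (non-perfect) one-to-one matching is handled by the usual reduction to a complete bipartite graph with zero-weight edges (or dummy nodes), which affects neither $\mathcal{SO}(Q,C)$ nor the feasibility of $l$.

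Next I would invoke the Kuhn--Munkres inequality recalled just above the statement: for any feasible labeling $l$ and any matching $M$, $w(M)\leq\sum_{v\in V}l(v)$. Applying this to a maximum-weight matching $M^{*}$, whose weight equals $\mathcal{SO}(Q,C)$ by the bipartite-matching formulation of semantic overlap, gives $\mathcal{SO}(Q,C)\leq\sum_{v\in V}l(v)$ throughout the run. Hence if at some iteration $\sum_{v\in V}l(v)<\theta_{lb}$, then $\mathcal{SO}(Q,C)<\theta_{lb}\leq\theta^{*}_{k}$ by Lemma~\ref{lemma:thetalb}; by Def.~\ref{def:top-k} there are already $k$ sets of semantic overlap at least $\theta^{*}_{k}>\mathcal{SO}(Q,C)$, so some valid top-$k$ answer excludes $C$ and $C$ may be pruned, aborting its exact matching computation.

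To argue this test is useful \emph{during} the matching (not only at termination), I would also observe that $\sum_{v\in V}l(v)$ is non-increasing: a label-improvement step lowers the labels of the $Q$-side alternating-tree nodes $S$ by some $\delta>0$ and raises the $C$-side nodes $T$ by $\delta$ with $|T|<|S|$, changing the sum by $-\delta(|S|-|T|)<0$, while augmentations leave labels unchanged. So once the sum falls below $\theta_{lb}$ it stays below, and since $\theta_{lb}$ only grows over the run of \algo, the conclusion persists.

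The step I expect to be the main obstacle is being careful about the optional/non-perfect matching: the Kuhn--Munkres bound is classically stated for perfect matchings on balanced graphs, so I must spell out the zero-padding reduction and verify it preserves both $\mathcal{SO}(Q,C)$ and the feasibility of $l$, so that $\sum_{v\in V}l(v)$ remains a legitimate upper bound on the true semantic overlap. The remainder is just chaining the established inequalities $\mathcal{SO}(Q,C)\leq\sum_{v}l(v)$ and $\theta_{lb}\leq\theta^{*}_{k}$.
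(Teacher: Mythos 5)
Your proof is correct and follows essentially the same route as the paper: invoke the Kuhn--Munkres feasible-labeling bound to get $\mathcal{SO}(Q,C)\leq\sum_{v\in V}l(v)$ throughout the run of the Hungarian algorithm, then chain with $\theta_{lb}\leq\theta^{*}_{k}$ (Lemma~\ref{lemma:thetalb}) to justify pruning. Your added care about the zero-padding reduction for the optional matching and the monotone decrease of the label sum are refinements the paper leaves implicit, but they do not change the argument.
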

\begin{proof} 
Suppose a valid node labeling function $l$ and an equality subgraph $G_l$.  The perfect matching $M$ on $G_l$ created by the Hungarian algorithm  is indeed the maximum matching and its weight $w(M)$ is the semantic overlap of $Q$ and $C$.    
Now, let $M^\prime$ be some perfect matching in $G$.  
Since from the Kuhn-Munkres theorem we know that $w(M^\prime)\leq w(M)$ and for the labeling $l$ and  matching $M$, $w(M)  \leq  \sum_{v\in V}l(v)$, we have $w(M^\prime)\leq\sum_{v\in V}l(v)$. 
Therefore, 
the sum of node labels is an upper bound for $\mathcal{SO}(Q,C)\leq\sum_{v\in V}l(v)$. This upper bound can be computed on the fly during graph matching. A set $C$ can be pruned as soon as $\sum_{v\in V}l(v)$ exceeds $\theta_{lb}$. 
\end{proof}
\fnrev{By computing this upper-bound during the process of graph matching for a set $C$, as soon as  $UB(C)<\theta_{lb}$, the process can be terminated and $C$ can be safely removed.}
The early termination is particularly important for  large sets when the matching of a large set is executed in parallel and a global $\theta_{lb}$ is updated as the processing of other sets is completed.  

\begin{example}\label{ref:intervals2} Suppose sets $D_1$, $\ldots$, $D_6$ of Fig.~\ref{fig:postintervals} are in the post-processing phase. Suppose we are searching for the top-$3$ sets. Each set is represented with an  interval of its lower-bound and upper-bound with respect to the query. The post-processing starts by $\mathcal{L}_{ub} = [D_2, D_1, D_6]$. From $LB(D_2)>\theta_{ub}$, we know we do not need to compute the exact match of $D_2$, because we know the true value of $\theta_k$ ($\theta^*_{k}$) is never greater than $\theta_{ub}$. Thus,  any set with semantic overlap higher than $\theta^*_{k}$ must be in the final result. The exact matching of $D_1$ and $D_6$ are calculated in parallel. Suppose $D_6$ finishes and its bounds are updated. This results in increasing $\theta_{lb}$ and decreasing $\theta_{ub}$. Thus, $D_6$ is removed from $\mathcal{L}_{ub}$ and added  to $\mathcal{Q}_{ub}$. After probing $\mathcal{Q}_{ub}$ considering $\theta_{lb}$, $D_3$ is added to $\mathcal{L}_{ub}$. Suppose using the early termination rule, $UB(D_3)$ is updated to a value lower than $\theta_{lb}$. The algorithm immediately stops the matching of $D_3$, and eliminates it from $\mathcal{L}_{ub}$. At this point, $D_6$ is added back to  $\mathcal{L}_{ub}$ and $\theta_{ub}=SO(D_6)$. 
The algorithm continues to compute the next set in $\mathcal{L}_{ub}$ ($D_1$).   
\end{example}

\begin{figure}[tbp]
\centering
    \includegraphics[width=\linewidth]{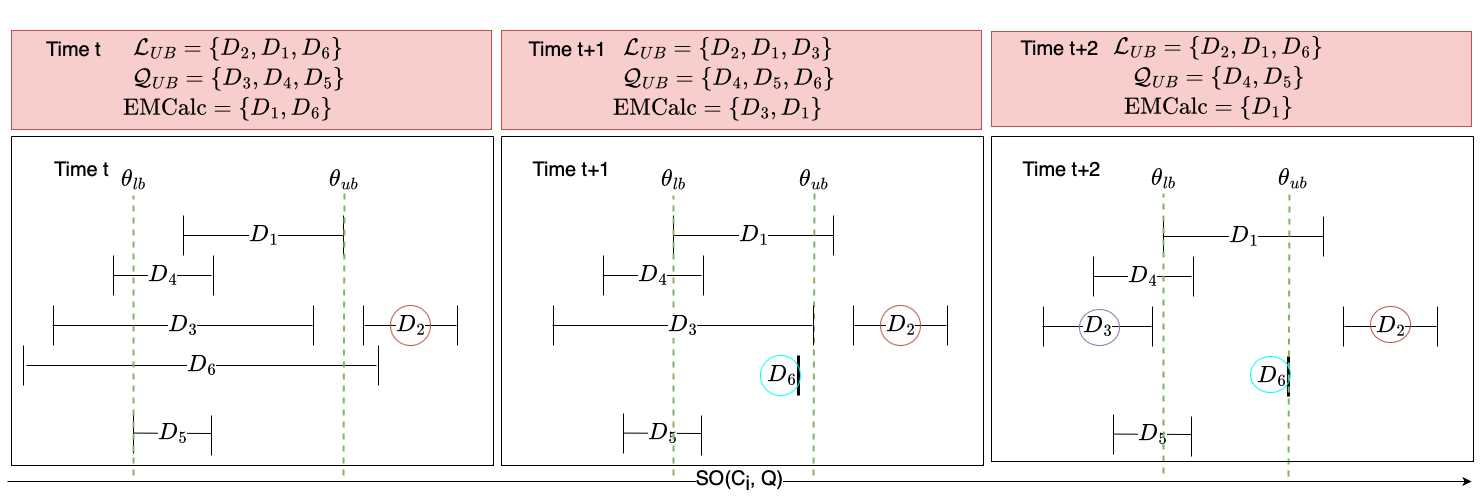}
    \caption{\pranayrev{Post-processing: pruning sets with upper and lower-bounds and exact semantic overlap, for $k=3$. Each set is represented with its LB and UB. Here the red circle indicates NoEM Filter, the blue circle indicates EM, and the purple circle indicates the EM-ETM Filter.}}
\label{fig:postintervals}
\end{figure}

To scale out search, we randomly partition the repository  and run \algo on  partitions in parallel. To improve the pruning power, all partitions share a global $\theta_{lb}$ that is the maximum of the $\theta_{lb}$. 

\eat{
\section{Parallel \algo} 
\label{sec:parallel}

A simple strategy to scale-out search is to randomly partition the repository  and 
distribute the top-$k$ search algorithm 
across the partitions and merge the local top-$k$ result of partitions to obtain the global top-$k$ result. 
We use a global $\theta_{lb}$ that is the maximum of the $\theta_{lb}$ of all partitions and is updated every time a local $\theta_{lb}$  is updated. 
Compared to the sequential version of \algo, the parallel version looks at more sets per time unit, therefore, even though the local $\theta_{lb}$ may never converge to $\theta_k^*$, because a global  $\theta_{lb}$ is used for pruning, we get high pruning power in addition to the computation speedup. Balancing the partitions to get the maximum benefit of parallelization requires a more advanced load-balancing partitioning solution and  warrants future research work. 
}

\eat{
\begin{figure*}[!ht]
    \begin{subfigure}[t]{0.24\linewidth}
        	\centering
        	\includegraphics[width =\textwidth]{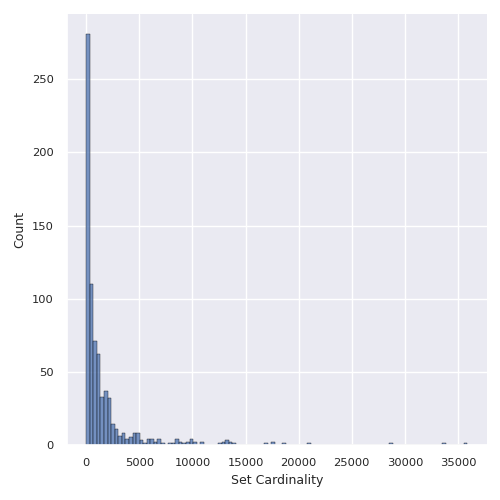} 
        	\caption{Set Cardinality  of Open Data}
            \label{fig:setdistod}
    \end{subfigure}
    \hfill
    \begin{subfigure}[t]{0.24\linewidth}
        	\centering
        	\includegraphics[width =\textwidth]{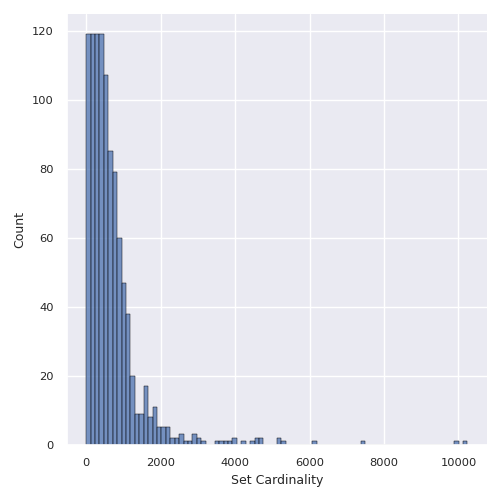} 
        	\caption{Set Cardinality  of WDC}
            \label{fig:setdistwdc}
    \end{subfigure}
    \hfill
    \begin{subfigure}[t]{0.24\linewidth}
        	\centering
        	\includegraphics[width =\textwidth]{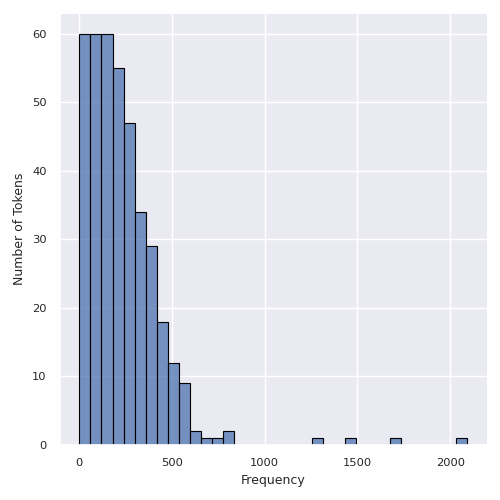} 
        	\caption{Tokens of Open Data}
            \label{fig:tokendistod}
    \end{subfigure}
    \hfill
    \begin{subfigure}[t]{0.24\linewidth}
        	\centering
        	\includegraphics[width =\textwidth]{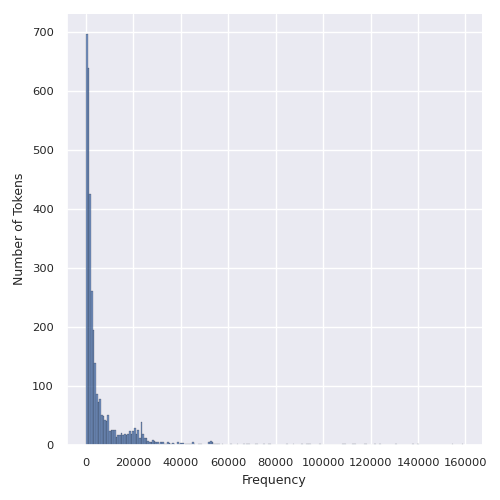} 
        	\caption{Tokens of WDC}
            \label{fig:tokendistwdc}
    \end{subfigure}
    \caption{Data Characteristics.}
\end{figure*}
}

\section{\fnrev{Analysis of \algo}}
\label{sec:analysis}

\subsection{\fnrev{Correctness}}
\label{sec:correctness}

\fnrev{
Based on Def.~\ref{def:semoverlap}, a set $C$ with at least one element with similarity greater than $\alpha$ has a non-zero semantic overlap with $Q$. 
The token stream retrieves all set elements with similarity greater than $\alpha$, and for each  element, the inverted index  returns all sets containing that element.  As a result, \eat{no set with  non-zero  semantic overlap is discarded.}\camera{all sets with non-zero semantic overlap are considered.}  
\eat{As shown in Algorithm~\ref{alg:refinement} and~\ref{alg:postproc},}
\algo can prune a set $C$ under three circumstances: 
(1) \nikolausrev{When} $C$ appears for the first time due to the similarity \camera{$s$} of one of its elements $c_j\in C$ to \eat{a}\camera{some} query element $q_i\in Q$. \eat{In fact, $\tsim(q_i,c_j)), q_i\in Q$ is}\camera{Because of the non-increasing similarity order in $\mathcal{I}_e$, $s = \max\{\tsim(q_i,c_j)\}, c_j\in C, q_i\in Q$.}  
Therefore, based on Lemma~\ref{lemma:ub} \camera{, the upper bound $UB(C)$ can be computed,} and $C$ is pruned if $UB(C)\leq\theta_{lb}$. 
Note that considering $\theta_{lb}$ for pruning does not create false negatives because,  as shown in Lemma~\ref{lemma:thetalb}, $\theta_{lb}\leq\theta_k\leq\theta^*_k$. 
(2) During refinement by iUB filter: Lemma~\ref{lemma:iub} proves an upper bound based on the similarity of elements  of arbitrary sets\nikolausrev{, thus any set pruned by $\theta_{lb}$ cannot be a false negative}. 
\eat{Again, by definition, any set pruned by $\theta_{lb}$ cannot be a false negative. }
(3) During post-processing: Lemma~\ref{lemma:hungarian} proves that whenever the sum of node labels of the Hungarian algorithm becomes smaller than $\theta_{lb}$, the set can be safely pruned. The algorithm continues to prune based on the UB-Filter in this phase as $\theta_{lb}$ is being improved by the exact match calculation.  
If a set is not pruned under the above conditions until the algorithm terminates, it remains in \eat{the} $\mathcal{L}_{ub}$ and is in fact in \camera{the} top-$k$ \camera{result}.}  

\subsection{\fnrev{Time and Space Complexity}}
\label{sec:complexity}

\fnrev{
\camera{The exact bipartite graph matching can be computed in $\mathcal{O}(n^3)$ time~\cite{Zvi,FredmanT87}\footnote{\camera{For graphs with a particular structure, a lower complexity may be achieved with Dijkstra's algorithm and Fibonacci heaps~\cite{AgarwalS14}.}}.} Although the worst-case time complexity of \algo is $\mathcal{O}(m\cdot n^3)$, where $n$ is the \camera{maximum} set cardinality \eat{of}\camera{and} $m$ is the number of sets in $\mathcal{L}$, \camera{thanks to} our filters \eat{ensure that we make}\camera{we require} far fewer \camera{than} $m$ comparisons, thereby reducing the overall runtime  by orders of magnitude in practice. Moreover, the EM-early-terminated filter can  reduce the number of iterations of the Hungarian algorithm.} 

\eat{
\subsection{Time Complexity}
\pranayrev{\algo work in two phases: Refinement and Post-processing. In the refinement phase, we process sets as they are returned by the token stream, $\mathcal{L}_e$ and update the bounds for each set. Let $Q$ be the given query, the size of the token stream would be $Q \cdot S_q$, where $S_q$ is the number of tokens that have similarity greater than $\alpha$ with the token query. Next we process, i.e, update the bounds, for all sets returned by the inverted index, $\mathcal{I}_e$, which could be in the worst case all the sets in the repository, i.e $L$. The lower bound update, \textbf{iLB} is constant time, and the upper bound update, uses a bucketization technique, hence the update in the worst case is the time taken to look up a set from a bucket. Thus the time complexity for the refinement phase is $\mathcal{O}\left(\left(\lvert Q \rvert \cdot S_q\right) \cdot \left(\lvert L \rvert \cdot \left(\log B \right)\right)\right)$ where $B$ is the size of the maximum bucket. In the post-processing phase, we have the priority queue of all possible candidate sets, $\mathcal{Q}_{UB}$, for which exact matching needs to be computed. We apply filters to further reduce the number of candidate sets. In the worst case we would have to compute the matching for all candidate sets; hence the time complexity for post-processing phase would be $\mathcal{O}\left(\lvert \mathcal{Q}_{UB} \rvert \cdot \left( \lvert C \rvert + \lvert Q \rvert \right)^3 \right)$.}
}

\eat{\fnrev{The top-$k$ LB list (top-$k$ UB list) that keeps track of the ids and set similarity scores  has size  $k.(sizeof(Int)+sizeof(Double))$.
The inverted index that maps integer token ids to a list of integer set ids has size  $\sum_{t_i\in\mathcal{L}}(f_i+1).sizeof(Int)$, where $f_i$ is the frequency of elements in $t_i\in\mathbb{D}$. 
We estimate the size of the token stream with $|Q|.S_q.(2.sizeof(Int) + sizeof(Double))$, where $S_q$ is the number of elements in the vocabulary that has a similarity greater than $\alpha$. The size of the upper-bound priority queue $\pqub$  is $|Candiates\setminus Pruned|.(sizeof(Int)+sizeof(Double))$.}}

\camera{Recall that $\mathbb{D}$ is \eat{the collection of all elements}\camera{the vocabulary of all distinct tokens in $\mathcal{L}$}. Each token in $Q$ can have a similarity greater than $\alpha$ with at most $\lvert \mathbb{D} \rvert$ tokens, thus the space complexity for the token stream $\mathcal{I}_e$ is $\mathcal{O}\left( \lvert \mathbb{D} \rvert \cdot \lvert Q \rvert \right)$. The inverted index $\mathcal{I}_s$ is linear in the input size: it stores $\lvert \mathbb{D} \rvert$ keys and the aggregate size of all lists is at most $D^{+} = \sum_{C \in \mathcal{L}}\lvert C \rvert$. With $\lvert \mathbb{D} \rvert \leq D^{+}$ and an average set size of $\Bar{C}$, the space complexity of $\mathcal{I}_s$ is $\mathcal{O}\left(\mathcal{L} \cdot \Bar{C}\right)$.\eat{it stores $\lvert \mathbb{D} \rvert$ keys and the size of each list is at most $\mathcal{L}$. 
Therefore, the overall space complexity is $\mathcal{O}\left(\mathcal{L}\cdot\mathbb{D}\right)$.} 
Both the top-$k$ lists, $\mathcal{L}_{lb}$ and  $\mathcal{L}_{ub}$, store at most $k$ sets at any given time during the iteration and have a space complexity of $\mathcal{O}\left(k\right)$. The size of the priority queue $\pqub$ is $\mathcal{O}\left(\mathcal{L}\right)$. This gives us the overall space complexity of  \eat{$\mathcal{O}\left(\mathcal{L}\cdot\mathbb{D}\right)$}$\mathcal{O}\left(\lvert \mathbb{D} \rvert \cdot \lvert Q \rvert + \mathcal{L} \cdot \Bar{C}\right)$.}

\eat{
We use the Faiss index~\cite{JDH17} that enables efficient similarity search on dense vectors. 
For the inverted index, we map integer token IDs to a list of integer set IDs. The size of this index  is simply the $\sum_{t_i\in U}  (f_i+1).sizeof(Int)$, where $f_i$ is the frequency of token $t_i\in\mathbb{D}$ and one accounts for the key of the map. Now, we turn our attention to data structures whose memory footprint is query-dependent.} 
\eat{The token stream $\mathcal{L}_e$ stores a list of edges each  with two integer IDs and its tokens and a double value of the \tsim of tokens. We estimate the size of the token stream, $\mathcal{L}_e$, with $|Q|.S_q.(2.sizeof(Int) + sizeof(Double))$.} 
\eat{In the current implementation,  the token stream is also a cache for the similarity of query tokens and candidate tokens read from the token stream during the refinement phase. This cache helps with speeding up the similarity matrix initialization of exact match calculation in the post-processing phase. In fact, no similarity calculation is needed during post-processing since all tokens with similarity greater than $\alpha$ to a query token are already read from the token stream and exist in the cache.} \eat{Recall  $S_q$ is the number of tokens in $U$ that has a similarity greater than $\alpha$ with token query $q$ (\tsim$\neq 0$). Note that in the current implementation, $\mathcal{L}_e$ is also a cache for the similarity of the query and candidate tokens.} 

\eat{For efficiency purposes, to keep track of candidate sets we have to maintain lists of IDs of candidate sets and IDs of pruned sets with the total size of  $|Candiates|.sizeof(Int)+|Pruned|.sizeof(Int)$. 
To incrementally compute the lower-bound of a candidate set that is not  already pruned, the algorithm maintains the matched tokens of the candidate as well as the matched tokens of the query in the greedy matching of the set and the query. The total size of the lower-bound data structure is $|Candiates\setminus Pruned|.(sizeof(Int)+2.|Q|.sizeof(Int)+sizeof(Double)))$. The size of the lower-bound map is the number of unpruned sets ($|Candiates-Pruned|$), each represented with an integer ID. For each set, the \algo\eat{\algoplus} stores the integer IDs of the matched tokens of the set and the matched tokens of the query. The number of matched tokens of a candidate set is at most $|Q|$. Therefore, the algorithm stores at most $|Q|$ integer token IDs for the query and for the candidate set in the second term. Finally, the double value of the incremental lower bound of the set is stored to avoid the calculation of the matching score every time a new edge is added to the matching. 
The size of the upper-bound data structure is  $|Candiates\setminus Pruned|.(sizeof(Int)+sizeof(Double))$. 
As described in \S~\ref{sec:refinement}, we use buckets for organizing the incremental upper bounds of unpruned sets. Each entry in a bucket contains the integer ID of a set and the double value of its current upper bound. 

The size of the upper-bound priority queue $\pqub$  during the post-processing  is $|Candiates\setminus Pruned|.(sizeof(Int)+sizeof(Double))$. This queue maintains the IDs of sets as well as their upper-bound scores. 
Finally, we avoid recalculating the exact match of candidate sets when they appear at the top of the upper-bound priority queue, for each set, the algorithm considers a boolean value to track whether the exact match of the set is already calculated.}  

\begin{table}[t]
\centering 
\caption{Characteristics of datasets.}
\label{tbl:datasets}
    \begin{tabular}{|c|cccc|}
    \hline
     & \#Sets & MaxSize & AvgSize & \#UniqElems \\
    \hline
    \pranay{DBLP} & 4,246 & 514 & 178.7 & 25,159\\
    OpenData & 15,636  & 31,901 & 86.4 & 179,830\\
    \pranay{Twitter} & 27,204 & 151 & 22.6 & 72,910 \\
    WDC & 1,014,369   & 10,240 & 30.6 & 328,357\\  
    \hline
    \end{tabular}
\end{table}

\section{Experiments}
\label{sec:experiments}

We evaluate the response time, memory footprint, and pruning power of filters of \algo on four real-world datasets and test various parameter settings and query cardinalities. 
\fnrev{We compare \algo with a baseline and a state-of-the-art fuzzy set similarity search technique.}  
\fnrev{The usefulness of semantic overlap measure for search is evaluated by comparing to the results of  vanilla set overlap search}. In our experiments, we use the cosine similarity of the embedding vectors of tokens using pre-trained vectors\footnote{https://fasttext.cc/docs/en/english-vectors.html} of FastText~\cite{GraveMJB17} as the function $\tsim$.  

\eat{
\subsection{Data Sets} 
\label{sec:datasets}

We used four data sets: DBLP~\cite{Jelodar2020}, OpenData~\cite{NargesianZPM18}, Twitter~\cite{DBLP:journals/corr/abs-2004-03688}, and the public corpus of WebTables (WDC)~\cite{LehmbergRMB16}.
For OpenData and WDC, first, we extracted sets by taking the distinct values in every column of every table. We removed all numerical values, as they create casual matches that are not meaningful, as the embedding vectors of numerical values are not accurate. 
Then, we identified the vector representation of all elements of all sets in a language model (we used the pre-trained model\footnote{https://fasttext.cc/docs/en/english-vectors.html} of  FastText~\cite{GraveMJB17}), then 
we discarded all sets that have less than 70\% of their elements covered by the pre-trained model. For DBLP we filter the original dataset by only considering publications in the years $2018$, and $2019$. For Twitter, we only consider the tweets in English. Note our algorithm still handles words that appear and might not be covered by the pre-trained model.

The original OpenData repository contains 120,213 sets. After  cleaning, we acquired 15,636 sets with at least 70\% coverage in FastText; this is 13\% of the raw data.  
The original WDC repository contains $240,653,580$ sets. After  cleaning, we acquired $1,030,374$ sets; this is $0.0042\%$ of the raw data. The characteristics of the extracted sets are
shown in Table~\ref{tbl:datasets}. \pranay{We see that WDC is the largest of the datasets with almost 239× more sets compared to DBLP, while DBLP has the largest average size.}
\eat{WDC has 65× more sets than OpenData, while its average set size is much smaller (30  vs. 86).}Fig.~\ref{fig:setdistod} and~\ref{fig:setdistwdc} show the distribution of set cardinality in OpenData and WDC, respectively. Note that the Uniqelements column indicates  the number of unique elements in the repository that are covered by the FastText database. The majority of sets in WDC are small while there exist more large sets in OpenData than in WDC. The  Hungarian algorithm~\cite{Munkres57}  has cubic complexity in the cardinality of sets. This means exact match calculation for a set in WDC is usually cheaper than OpenData. 

Fig.~\ref{fig:tokendistod} and~\ref{fig:tokendistwdc} show the distribution of element frequency in OpenData and WDC, respectively.  Unlike in OpenData, there are some very frequent elements in WDC, which results in excessively large posting lists in $\mathcal{I}_s$. As a result, the number of candidate sets in WDC during the refinement is large, and  updating the bounds of the sets is often more expensive in WDC than OpenData as we will observe in the experiments.

\subsection{Benchmarks} 
\label{sec:benchmark} 

We generated one query benchmark from each data set.
Each benchmark is a set of queries (sets) selected from a cardinality 
range in order to evaluate performance on different query
cardinalities. 
For OpenData, the benchmarks are from six intervals: 20
to 250, 250 to 500, 500 to 750, 750 to 1k, 1k to 5k, and 5k to 32k. 
For WDC, we used different intervals for benchmarks:
10 to 250, 250 to 500, 500 to 750, 750 to 1k, and 1k to 11k. 
For each interval, we sample 50 and 100 sets using uniform random sampling for OpenData and WDC,  respectively. 
We report the average of results over the queries for each interval.
Sampling by interval prevents a benchmark
that has the same skewed distribution as the repository itself,
and is heavily biased to sample smaller sets. 
Large intervals are considered for high cardinality sets because of the power-law distribution of set cardinality~\cite{ZhuNPM16}. 
The last interval of WDC is selected such that the requirement of having 100 query sets from an interval is fulfilled.  
}

\subsection{Experimental Setup}
\label{sec::exp-setup}

\subsubsection{\bf{\em{Datasets}}}
\label{sec:datasets}
We use four datasets: DBLP~\cite{Jelodar2020}, OpenData~\cite{NargesianZPM18}, Twitter~\cite{DBLP:journals/corr/abs-2004-03688}, and the public corpus of WebTables (WDC)~\cite{LehmbergRMB16}. 
For DBLP, we  
\fn{consider} papers from 2018 and 2019, and for each publication, we generate a set of white-spaced words from the paper title and abstract. For each \fn{English tweet} in the  Twitter dataset, 
\fn{we generate} a set consisting of the distinct words in the tweet 
except the emojis and URLs.\eat{For} \pranay{The sets for} OpenData and WDC\eat{, first, we extracted sets by taking} \pranay{are formed by} the distinct values in every column of every table. For all datasets, we remove numerical values to avoid casual matches. 
We  further filter OpenData and WDC  by discarding all sets that have less than 70\% coverage of  pre-trained vectors.  
The characteristics of the extracted sets are
shown in Table~\ref{tbl:datasets}.  
The majority of sets in WDC and Twitter are small but they contain more sets compared to OpenData and DBLP. 
Unlike in OpenData, there are some very frequent elements in WDC, which results in excessively large posting lists in \pranay{the inverted index.} As a result, the number of candidate sets in WDC during the refinement is large, and  updating the bounds of the sets is often more expensive in WDC than OpenData. 

\subsubsection{\bf{\em{Benchmarks}}}
\label{sec:benchmark}

We generated one query benchmark, i.e., a collection of query sets, from each data set. 
The set cardinalities in WDC and OpenData are highly skewed~\cite{ZhuDNM19}.   
\fn{In order to evaluate the performance depending on the query cardinality, the benchmarks of WDC and OpenData are  collections} of query sets selected from different cardinality ranges. The ranges for OpenData are: 10 to 750, 750 to 1000, 1000 to 1500, 1500 to 2500, 2500 to 5k, and 5k to 32k; the intervals for WDC are:
10 to 250, 250 to 500, 500 to 750, 750 to 1k, and 1k to 11k. For each interval, we sample $50$ and $100$ sets using uniform random sampling for OpenData and WDC respectively. Sampling by interval prevents the benchmarks from being biased towards small sets. Large intervals are used for high cardinality sets due to the power-law distribution of the set cardinalities~\cite{ZhuNPM16}. \fn{Since DBLP and Twitter contain fewer skewed sets}, we \eat{don't}do not  create intervals and draw 100 random sets using uniform sampling. \fn{We report the average of results over the queries for each benchmark and interval. }

\subsubsection{\bf{\em{Implementation}}}
\label{sec:implementation}

The inverted index ($\mathcal{I}_s$) is computed on the fly and stored in an in-memory hash map. 
To generate the token stream we use the GPU implementation of the \eat{top-$k$}\camera{\mbox{top-$k$}} Faiss index~\cite{johnson2019billion} over high-dimensional vectors. 
We query the Faiss index in batches of 100 elements.
\pranay{The construction time of the inverted index is $1.5$, $3.0$, $1.3$, and $80$ seconds, and the construction time of Faiss index \eat{takes}is $3.6$, $9.5$, $3.8$, $12.5$ seconds for DBLP, OpenData, Twitter, and WDC respectively.}
We cache the similarity of returned vectors during the refinement phase for reuse during the  initialization  of the  similarity matrix used in  graph  matching. 
For graph matching, we use an implementation of the Hungarian algorithm~\cite{mcximing}. 
To compute the graph matching of sets in parallel during the post-processing phase, we use a C++17-compatible thread pool implementation~\cite{abs-2105-00613}. 
Unless otherwise specified, the following  parameters are used in all experiments: similarity threshold $\alpha= 0.8$, k = $10$, and partitions = $10$.

\subsubsection{\bf{\em{Baselines}}}
\label{sec:baseline}

\pranay{The baseline approach for \eat{top-$k$}\camera{\mbox{top-$k$}} semantic overlap search\eat{is to} iterates over all candidate sets \eat{sequentially} and computes their bipartite graph matchings. Candidate sets are those  
that have at least one element with similarity to\eat{the} any query elements greater than the threshold $\alpha$.} We use the token stream to get a list of candidate sets (baseline's refinement phase) and 
use a thread pool~\cite{abs-2105-00613} to parallelize the computation of 
the graph matching of all candidate sets (baseline's post-processing). Given the  sheer number of sets and high frequency of elements in WDC, computing exact graph matchings for all candidate sets\eat{becomes expensive and} \pranay{is} infeasible. 
 For example, we have $190,679$ candidate sets for a query set with cardinality $53$. 
To reduce the number of candidate sets, we activate the iUB-Filter to assist with set pruning. This is referred to as {\em Baseline+}. 
\eat{For WDC, given the large datalake and the high frequency of elements, it becomes expensive and infeasible to compute all exact graph matchings. For example, for a query set with cardinality $53$ we have $190679$ possible candidate sets. We turn on the UB-Filter to help with pruning some sets during the refinement phase. We call this {\em Baseline+}.}
\eat{
\subsubsection{\bf{\em{Default Parameters}}}
\label{sec:default}
Unless otherwise specified, the following \algo parameters are used in all experiments: similarity threshold $\alpha= 0.8$, k = $10$, and partitions = $10$.}

\subsubsection{\bf{\em{System Specifications}}}
All experiments are conducted on a machine
with 2 Intel\textsuperscript{\textregistered} Xeon Gold 5218 @ 2.30GHz (64 cores), 512 GB DDR4 memory, a Samsung\textsuperscript{\textregistered} SSD 983 DCT M.2 (2 TB), 4 GPUs - TU102 (GeForce RTX 2080 Ti).

\begin{figure*}[!ht]
    \begin{subfigure}[t]{0.24\linewidth}
        	\centering
        	\includegraphics[width =\textwidth]{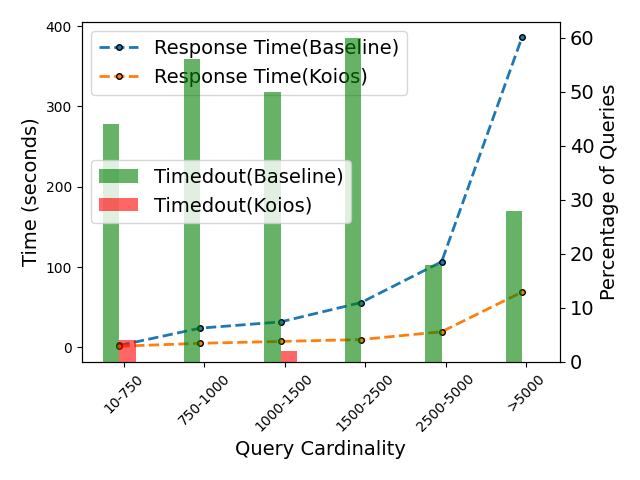}
        	\vspace{-6mm}
        	\caption{}
            \label{fig:timeresod}
    \end{subfigure}
    \hfill
    \begin{subfigure}[t]{0.24\linewidth}
        	\centering
        	\includegraphics[width =\textwidth]{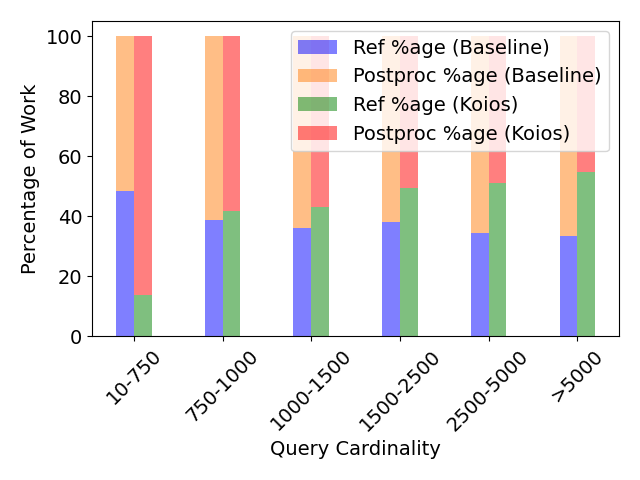} 
        	\vspace{-6mm}
        	\caption{}
            \label{fig:timepreod}
    \end{subfigure}
    \hfill
    \begin{subfigure}[t]{0.24\linewidth}
        	\centering
        	\includegraphics[width =\textwidth]{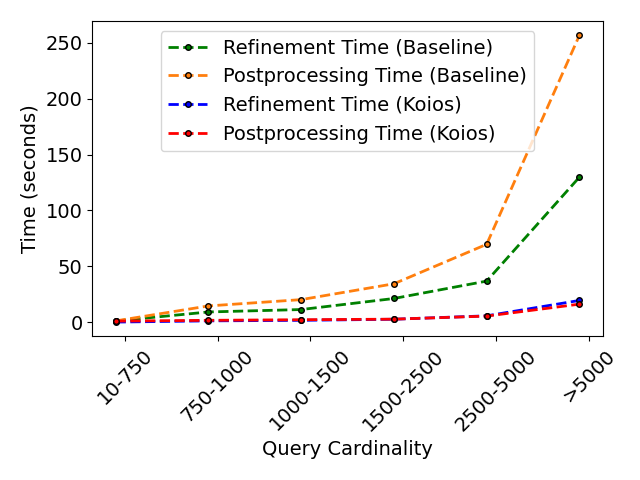} 
        	\vspace{-6mm}
        	\caption{}
            \label{fig:timebrod}
    \end{subfigure}
    \hfill
    \begin{subfigure}[t]{0.24\linewidth}
        	\centering
        	\includegraphics[width =\textwidth]{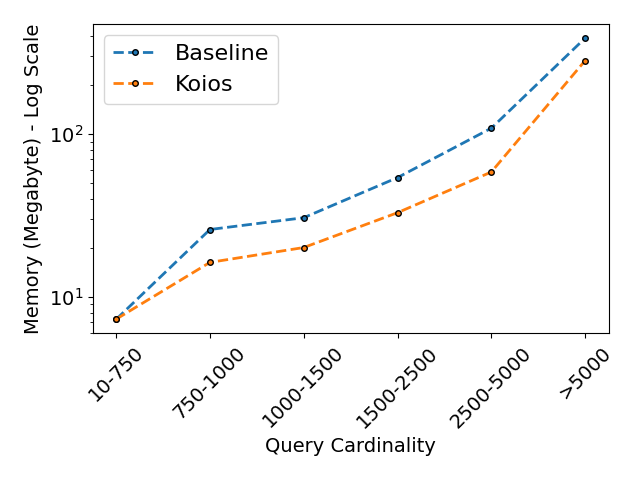} 
        	\caption{}
            \label{fig:memod}
    \end{subfigure}
    \caption{\pranayrev{\camera{OpenData results: (a) response time, (b), (c) phase breakdown, (d) memory footprint.}}}
\end{figure*}
\begin{figure*}[!ht]
\vspace{-4mm}
    \begin{subfigure}[t]{0.24\linewidth}
        	\centering
        	\includegraphics[width =\textwidth]{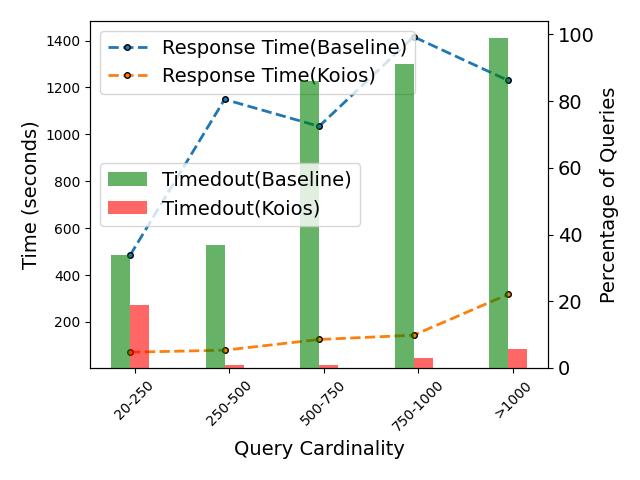} 
        	\caption{}
            \label{fig:timereswdc}
    \end{subfigure}
    \hfill
    \begin{subfigure}[t]{0.24\linewidth}
        	\centering
        	\includegraphics[width =\textwidth]{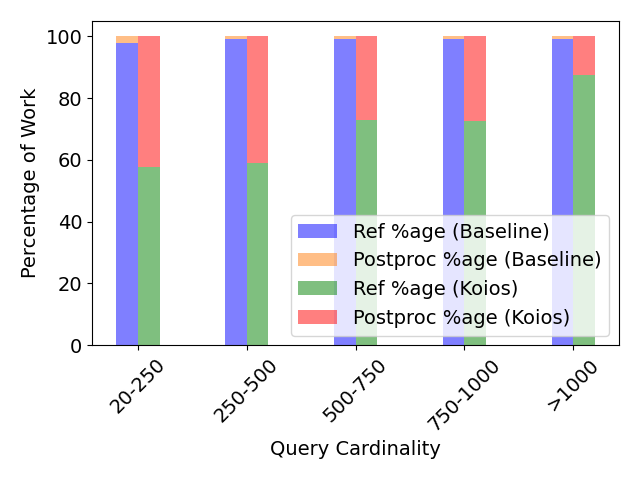} 
        	\vspace{-6mm}
        	\caption{}
            \label{fig:timeprewdc}
    \end{subfigure}
    \hfill
    \begin{subfigure}[t]{0.24\linewidth}
        	\centering
        	\includegraphics[width =\textwidth]{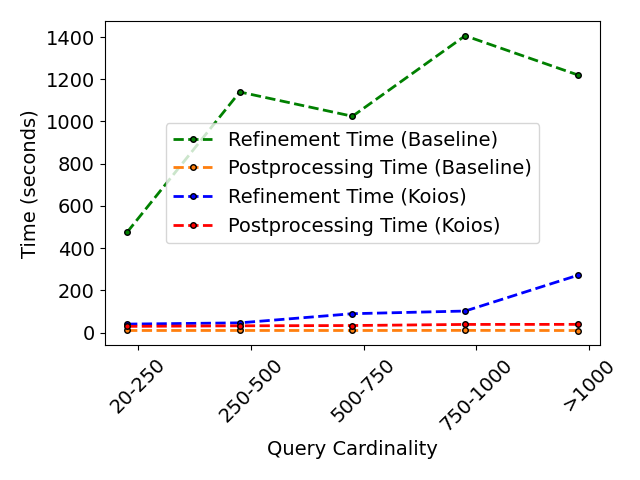} 
        	\vspace{-6mm}
        	\caption{}
            \label{fig:timebrwdc}
    \end{subfigure}
    \hfill
    \begin{subfigure}[t]{0.24\linewidth}
        	\centering
        	\includegraphics[width =\textwidth]{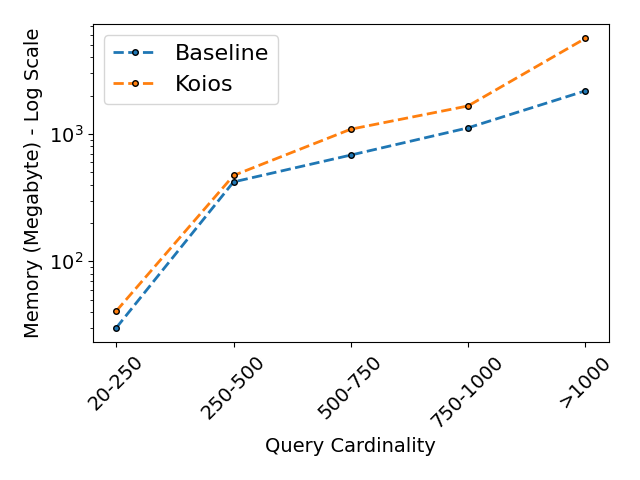} 
        	\vspace{-6mm}
        	\caption{}
            \label{fig:memwdc}
    \end{subfigure}
    \caption{\pranayrev{\camera{WDC results: (a) response time, (b), (c) phase breakdown, (d) memory footprint.}}}
\end{figure*}

\eat{\begin{figure*}[!ht]
\label{fig:odresults}
    \begin{subfigure}[t]{0.32\linewidth}
        	\centering
        	\includegraphics[width =\textwidth]{figs/new_time-tt-bs-open_data-seq.png}
        	\vspace{-6mm}
        	\caption{}
            \label{fig:timeresod}
    \end{subfigure}
    \hfill
    \begin{subfigure}[t]{0.32\linewidth}
        	\centering
        	\includegraphics[width =\textwidth]{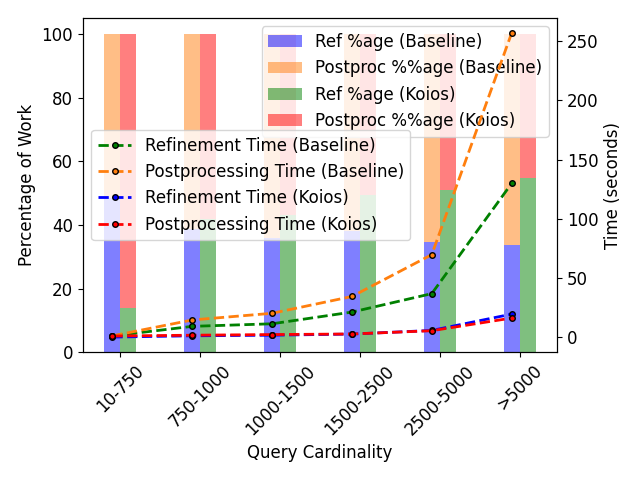} 
        	\vspace{-6mm}
        	\caption{}
            \label{fig:timepreod}
    \end{subfigure}
    \hfill
    \begin{subfigure}[t]{0.32\linewidth}
        	\centering
        	\includegraphics[width =\textwidth]{figs/new_mem-open_data.png} 
        	\vspace{-6mm}
        	\caption{Memory Footprint}
            \label{fig:memod}
    \end{subfigure}
    \caption{OpenData Results: (a) Response Time, (b) Phase Breakdown, (c) Memory Footprint}
\end{figure*}
\begin{figure*}[!ht]
\label{fig:wdresults}
\vspace{-4mm}
    \begin{subfigure}[t]{0.32\linewidth}
        	\centering
        	\includegraphics[width =\textwidth]{figs/new_time-tt-bs-wdc-seq.png} 
        	\vspace{-6mm}
        	\caption{}
            \label{fig:timereswdc}
    \end{subfigure}
    \hfill
    \begin{subfigure}[t]{0.32\linewidth}
        	\centering
        	\includegraphics[width =\textwidth]{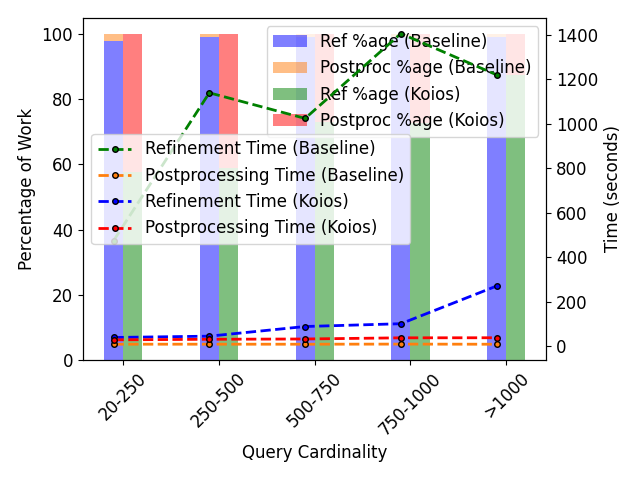} 
        	\vspace{-6mm}
        	\caption{}
            \label{fig:timeprewdc}
    \end{subfigure}
    \hfill
    \begin{subfigure}[t]{0.32\linewidth}
        	\centering
        	\includegraphics[width =\textwidth]{figs/new_mem-wdc.png} 
        	\vspace{-6mm}
        	\caption{}
            \label{fig:memwdc}
    \end{subfigure}
    \caption{WDC Results: (a) Response Time, (b) Phase Breakdown, (c) Memory Footprint}
\end{figure*}
}
\eat{\begin{figure*}[!ht]
\vspace{-4mm}
    \begin{subfigure}[t]{0.24\linewidth}
        	\centering
        	\includegraphics[width =\textwidth]{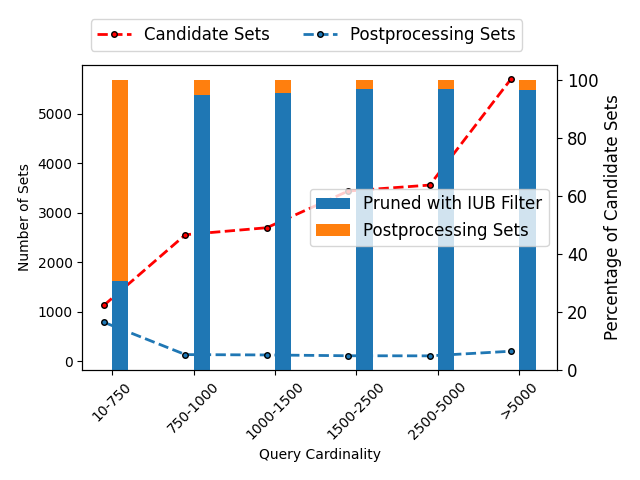} 
        	\vspace{-6mm}
        	\caption{Refinement - OpenData}
            \label{fig:rpod}
    \end{subfigure}
    \hfill
    \begin{subfigure}[t]{0.24\linewidth}
        	\centering
        	\includegraphics[width =\textwidth]{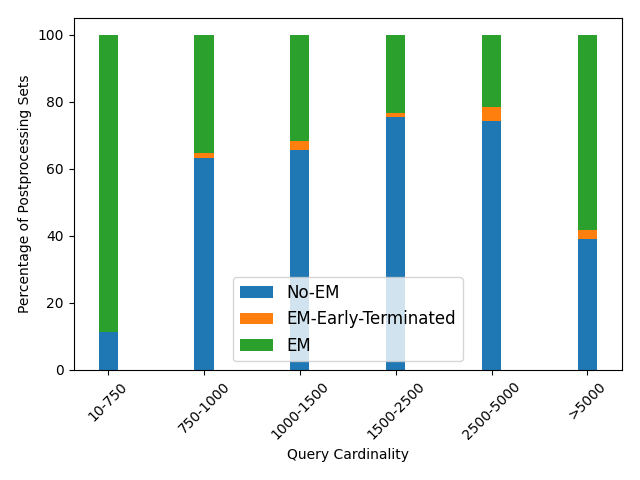} 
        	\vspace{-6mm}
        	\caption{Post-Processing -  OpenData}
            \label{fig:ppod}
    \end{subfigure}
    \hfill
    \begin{subfigure}[t]{0.24\linewidth}
        	\centering
        	\includegraphics[width =\textwidth]{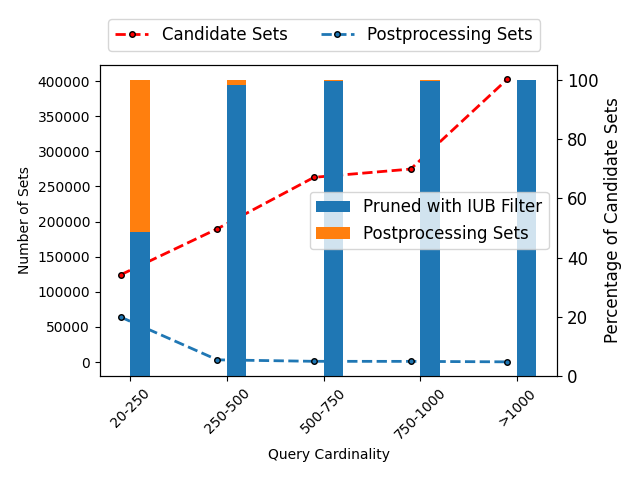} 
        	\vspace{-6mm}
        	\caption{Refinement - WDC}
            \label{fig:rpwdc}
    \end{subfigure}
    \hfill
    \begin{subfigure}[t]{0.24\linewidth}
        	\centering
        	\includegraphics[width =\textwidth]{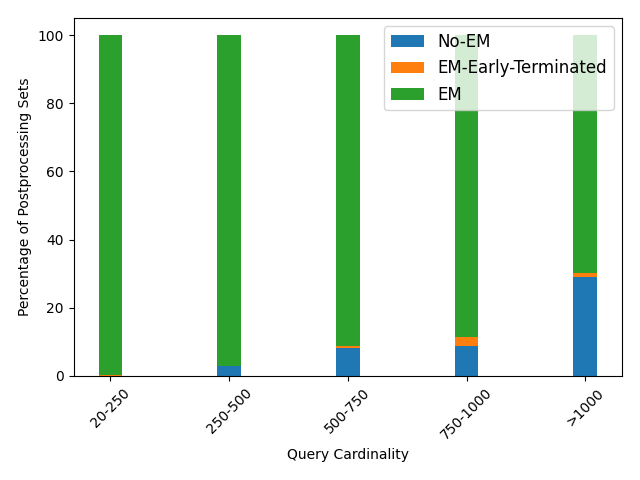} 
        	\vspace{-6mm}
        	\caption{Post-Processing -  WDC}
            \label{fig:ppwdc}
    \end{subfigure}
    \caption{Pruning Power of Filters \fnrev{changing this to a table}}
    \label{fig:5}
\end{figure*}}

\begin{table}[t]
\centering 
\caption{\camera{Average percentage of sets pruned using filters.}} \label{tbl:datasets_filters}
    \begin{tabular}{|c|c|cc|}
    \hline
     Datasets & Refinement & \multicolumn{2}{c|}{Postprocessing} \\
     & iUB-Filter & EM-Early-Terminated & No-EM \\
    \hline
    DBLP & $91\%$ & $5\%$ & $9.2\%$ \\
    OpenData & $85.5\%$ & $2.1\%$ & $54.8\%$ \\
    Twitter & $53.5\%$ & $0\%$ & $1.4\%$ \\
    WDC & $89.2\%$ & $0.9\%$ & $9.8\%$ \\  
    \hline
    \end{tabular}
\end{table}

\eat{
\begin{table*}[ht!]
\begin{minipage}{0.75\columnwidth}
\centering 
    \includegraphics[width=0.9\columnwidth]{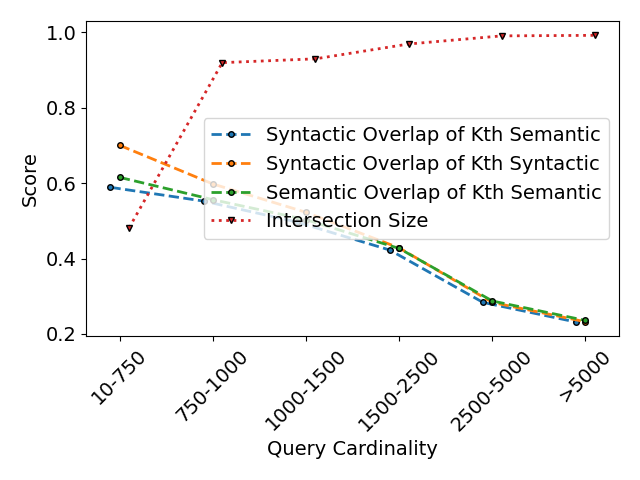}
    \captionof{figure}{\fnrev{Quality of Results}}
\label{fig:syn-sem}
\end{minipage}
\hfill
\begin{minipage}{1\columnwidth}
\centering
\captionof{table}{\fn{Average response time and memory footprint}}
\label{tbl:datasets_time}
\resizebox{\columnwidth}{!}{
    \begin{tabular}{|c|cccc|cc|}
    \hline
     & \multicolumn{4}{c|}{\algo} & \multicolumn{2}{c|}{Baseline} \\
     Datasets & Refinement & Postproc & Response & Mem & Response & Mem\\
     & (sec) & (sec) & (sec) & (MB) & (sec) & (MB)\\
    \hline
    DBLP & $0.3$ & $0.44$ & $0.83$ & $16$ & $211$  & $11$\\
    OpenData & $7.19$ & $6.9$ & $18.6$ & $69.6$ & $101$ & $102.5$\\
    Twitter & $0.2$ & $0.45$ & $0.7$ & $10$ & $518$  & $10$\\
    WDC & $109$ & $34.3$ & $147$ & $1,775$ & $1,062$ & $885$\\
    \hline
    \end{tabular}
}
\end{minipage}
\end{table*}
}

\begin{table}
\centering
\captionof{table}{\camera{Average response time and memory footprint.}}
\label{tbl:datasets_time}
\resizebox{\columnwidth}{!}{
    \begin{tabular}{|c|cccc|cc|}
    \hline
     & \multicolumn{4}{c|}{\algo} & \multicolumn{2}{c|}{Baseline} \\
     Datasets & Refinement & Postproc & Response & Mem & Response & Mem\\
     & (sec) & (sec) & (sec) & (MB) & (sec) & (MB)\\
    \hline
    DBLP & $0.3$ & $0.44$ & $0.83$ & $16$ & $211$  & $11$\\
    OpenData & $7.19$ & $6.9$ & $18.6$ & $69.6$ & $101$ & $102.5$\\
    Twitter & $0.2$ & $0.45$ & $0.7$ & $10$ & $518$  & $10$\\
    WDC & $109$ & $34.3$ & $147$ & $1,775$ & $1,062$ & $885$\\
    \hline
    \end{tabular}
}
\end{table}

\subsection{Response Time}
\label{sec:runtime}

We report the average response time, in seconds, across benchmarks for all datasets in Table~\ref{tbl:datasets_time} \pranay{(inverted index and token index construction time are excluded from \S~\ref{sec:implementation})}.\eat{Note that the response time does not take into account the time taken to construct the inverted index and \eat{Faiss}token index.}\eat{\pranay{For timed-out queries, we set the response time to the time-out limit of $2500$ seconds.}} \pranay{We do not report the time for the timed-out queries ($2500$ seconds),  therefore, we do not have enough data for some intervals of WDC and OpenData.} 
According to Table~\ref{tbl:datasets_time}, \algo achieves at least 5x speedup over the baseline across all datasets and at least 200x for DBLP and Twitter. We present additional analyses of WDC and OpenData  based on query cardinality, since these  datasets demonstrate a large set cardinality skew.

{\bf Effect of Query Cardinality: } Fig.~\ref{fig:timeresod},~\ref{fig:timepreod},~\ref{fig:timereswdc}, and~\ref{fig:timeprewdc} show the response time and the relative time spent in each phase for OpenData and WDC, respectively. The time reported is averaged over all  queries in each interval.  Because \algo processes all partitions in parallel, we report the average ratio of time spent by a partition over all queries in an interval.  We observe that the  response time increases with query cardinality, \pranay{which entails} a larger number of similar elements and candidate sets returned by $\mathcal{I}_s$. \pranay{The share  of work of WDC in the refinement  is higher than OpenData, because of its sheer number of sets and the high frequency of elements.} 

{\bf Comparing to Baseline:} Fig.~\ref{fig:timeresod} and~\ref{fig:timereswdc} show that \algo particularly outperforms the baseline for medium to large queries in OpenData and WDC. This emphasizes the pruning power of \pranay{the} filters. 
\fn{Fig.~\ref{fig:timeresod} and~\ref{fig:timereswdc} also report the number of timed-out queries which are much higher for the baseline than \pranay{for} \algo, since the baseline does not prune the large low potential sets in the refinement phase.} \pranay{\eat{However,}\algo times out for only  approximately $5\%$ of OpenData queries and approximately $20\%$ of WDC queries for small queries. This is due to the large posting lists, which result in a significant number of sets during post-processing as\eat{seen} visible \pranayrev{from Tables~\ref{tbl:opendata_pruning} and~\ref{tbl:wdc_pruning}.}\eat{in Fig.~\ref{fig:rpod} and~\ref{fig:rpwdc}.}  We also note that \algo times out for certain medium-large queries, which is due to expensive graph matching.} \pranay{In summary, the filter overhead of \algo during  refinement clearly pays off and significantly improves the runtime over the baseline. }
We remark that \algo finds $k \times \left(\text{number of partitions}\right)$ sets significantly faster than the time-out limit, whereas\eat{practically} the majority of \pranay{large} queries\eat{from the last interval} for WDC time  out for the baseline.\eat{We see that almost all of the queries from the last interval time-out for the baseline, whereas \algo finds  $k * \left(\text{number of partitions}\right)$ sets much faster than the time-out limit.}\eat{Note that for smaller queries although the response time of the baseline and \algo are almost similar for OpenData because \algo works on partitions, it finds $k * \left(\text{number of partitions}\right)$ sets as compared to the baseline that finds $k$ sets.} \pranay{Although the difference in response time of \algo and baseline is not  prominent for smaller queries of OpenData, \algo discovers $k \times \left(\text{number of partitions}\right)$  sets as opposed to the baseline.} 
\eat{Semantic overlap search by \algo has low response time across all \pranay{datasets}\eat{intervals}. The response time increases with query cardinality. This is because a large cardinality means a larger number of similar elements and a larger number of candidate sets returned by $\mathcal{I}_s$. 
A similar observation is made in Fig.~\ref{fig:timepreod} and ~\ref{fig:timeprewdc}, where \algo spends the majority of its time in the refinement phase. The time spent in post-processing becomes on par with the time spent in the refinement phase for large queries.  
Moreover, a larger query results in a more expensive graph matching.  

We see that this trend continues for WDC as well, the response time increases with query cardinality for \algo. The green bar in Fig.~\ref{fig:timereswdc} shows the percentage of queries that time out ($2,500$ seconds) for the baseline. We see that for larger queries the baseline times out for almost all queries, compared to \algo that successfully returns the top-$k$ sets for all queries.}
\eat{
For WDC, the algorithm spends more time in the refinement phase than in the post-processing phase. This is due to the sheer number of sets that the algorithm retrieves from the posting lists of WDC elements. The bounds of these sets need to be tuned in this phase. Unlike WDC, for OpenData, the algorithm spends most of its time in the post-processing and calculating exact matches. 
This is due to the overall large set cardinality of sets in OpenData. }

\fnrev{\textbf{Comparing to Fuzzy Search:} Fuzzy search techniques including Fast-Join\cite{5767865,WangLF14} and SILKMOTH~\cite{DengKMS17} cannot solve the top-$k$ overlap similarity search problem that we solve in this paper: (1) Only specific character-level similarities (Jaccard on 
\fnrev{white-space separated tokens of an element} 
and edit distance) between set \fnrev{elements} are supported; semantic \eat{token} similarity like cosine on word \eat{token} embeddings (as we use in our experiments) cannot be applied. (2) Existing fuzzy search algorithms are threshold-based: In order to retrieve the top-$k$ most similar sets to a given query, the threshold $\theta^*_k$ is required, but this threshold is not known upfront; in fact, it is one of the challenges of top-$k$ search and part of our solution to converge to this threshold quickly.} 

\fnrev{While fuzzy search algorithms do not support semantic \fnrev{element} 
similarity, our KOIOS algorithm does support all syntactic \fnrev{element} 
similarity functions. We extend the threshold-based fuzzy search to support top-$k$ search as follows: (1) \eat{Pick a random sample of $k$ sets from input collection $\mathcal{L}$ and compute the threshold $\theta$ as the minimum similarity of the query to any of the $k$ sets.}\pranayrev{Pick the threshold $\theta$ as the minimum $\theta^*_k$ amongst all the queries. Note this gives SILKMOTH an advantage as we pass the true value of $\theta^*_k$} (2) Compute the fuzzy search with threshold $\theta$ and select the top-$k$ most similar sets from the query result (by maintaining a top-$k$ priority queue).}

\fnrev{We focus our study on SILKMOTH, which was shown to widely outperform Fast-Join~\cite{DengKMS17} \fnrev{and consider OpenData and WDC datasets. Since sets in these datasets are extracted from tables, the majority of elements consist of very few words,  which results in a zero edge weight for non-identical elements in SILKMOTH. Therefore, in our experiments, we consider Jaccard on $3$-grams representation of each element as an element similarity for both \algo and SILKMOTH.} 
To generate the token stream, we  precompute  elements in the vocabulary that are similar to each query element with the Jaccard similarity threshold of $\alpha = 0.8$, using the set similarity join techniques~\cite{MannAB16}. It takes $8$ seconds to compute the token stream for the benchmark.}

\fnrev{We compare two versions of SILKMOTH: The first version, SILKMOTH-semantic, adapts the SILKMOTH algorithm to cover most of the functionality of \algo\footnote{\fnrev{SILKMOTH-semantic requires a \emph{metric} token similarity function since the triangle inequality is leveraged. \algo requires only symmetry and can also deal with cosine similarity on token embeddings, which is not a metric.}}; this adaption was suggested by the original authors as a generic search framework and excludes \eat{$q$-gram}\pranayrev{similarity function} specific filters~\cite{DengKMS17}. The second version, SILKMOTH-syntactic, uses all indexes and filters, including those that are applicable only to \pranayrev{particular similarity functions}. We run the algorithms on \pranayrev{54 queries randomly sampled from the benchmark queries of OpenData to make sure we evaluate on small, medium, and large sets} and measure response time. The average response time of \algo, SILKMOTH-syntactic, and  SILKMOTH-semantic are $72$, $141$, and $400$ seconds, respectively.   \algo outperforms \pranayrev{both SILKMOTH-semantic, and SILKMOTH-syntactic by an order of magnitude $6$x(timed-out), and $2$x respectively on all query size ranges. SILKMOTH produces signatures from the set elements and utilizes them to help decrease the candidate space. The number of signatures increases when the number of set elements increases (here by  splitting into $q$-grams) 
and SILKMOTH-syntactic times out due to the sheer number of viable candidates. 
\algo outperforms SILKMOTH-syntactic because it operates on an ordered stream of set element pairings based on  similarity  and is thus unaffected by the number of elements. \algo outperforms SILKMOTH-semantic because the pruning power of SILKMOTH is highly dependent on filters that are specialized for certain similarity functions.}} 

\subsection{\pranayrev{Pruning Power of Filters}}
\label{sec:prunpower}

\pranay{ Table~\ref{tbl:datasets_filters} reports the mean pruning power of different filters used in both phases for all datasets and \pranayrev{Tables~\ref{tbl:opendata_pruning} and~\ref{tbl:wdc_pruning}}\eat{Figure~\ref{fig:5}} zooms into the pruning power for OpenData and WDC across query cardinality intervals.} 

{\bf Refinement Phase: } \eat{The majority of sets are pruned by the iUB filter during the refinement phase (Table~\ref{tbl:datasets_filters}). 
As shown in Fig.~\ref{fig:rpod} and~\ref{fig:rpwdc}, the total number of candidate sets and the blue lines show the number of sets that require  post-processing.} In OpenData and WDC, the number of candidate sets increases with query cardinality because  more posting lists from $\mathcal{I}_s$ \pranay{must be read}. The pruning power of the iUB-filter increases with the increase of query cardinality. This is because of two reasons, \fn{first, there are a significant amount of small cardinality sets in OpenData and WDC (Table~\ref{tbl:datasets}),}\eat{first from \S~\ref{sec:datasets} we know that for OpenData and WDC\eat{can see from Fig.~\ref{fig:setdistod}} that there are a significant amount of small cardinality sets, indicating that we end up with a lot of small cardinality candidate sets from} \pranay{that are returned by} the posting lists. Second, the iUB of each set is calculated based on the number of remaining elements, thus candidate sets with smaller cardinality relative to the query set have a much smaller iUB as compared to the $\theta_{lb}$ and are hence pruned. 
As shown in the \pranayrev{Tables~\ref{tbl:opendata_pruning} and~\ref{tbl:wdc_pruning}}\eat{bar plots of~\ref{fig:rpod} and~\ref{fig:rpwdc}}\eat{and~\ref{fig:refinedblptwitter}}, although the number of candidate sets increases with query cardinality, the fraction that requires post-processing by \algo decreases with query cardinality. For example, \algo requires the post-processing of \eat{fewer}\pranay{less} than 5\% of candidate sets for large queries for WDC.

\begin{figure*}[!ht]
\begin{subfigure}[t]{0.24\linewidth}
        \centering
        	\includegraphics[width =\textwidth]{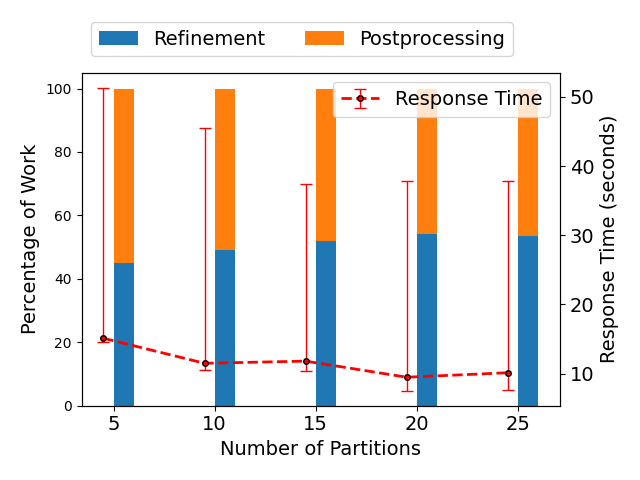} 
        	\vspace{-6mm}
        	\caption{}
            \label{fig:parts}
    \end{subfigure}
    \hfill
    \begin{subfigure}[t]{0.24\linewidth}
        	\centering
        	\includegraphics[width =\textwidth]{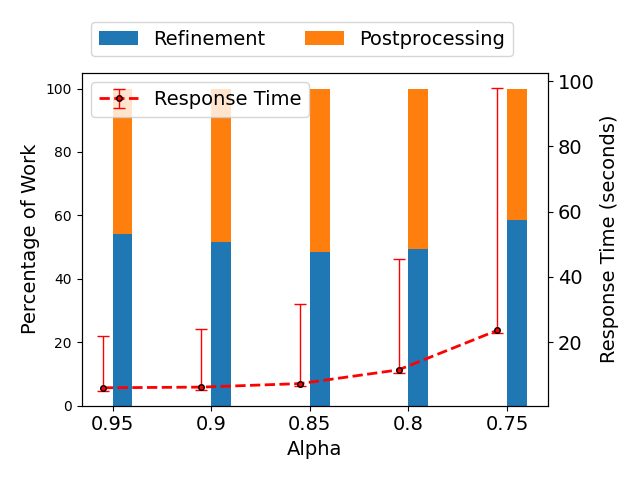} 
        	\vspace{-6mm}
        	\caption{}
            \label{fig:alpha}
    \end{subfigure}
    \hfill
    \begin{subfigure}[t]{0.24\linewidth}
        	\centering
        	\includegraphics[width =\textwidth]{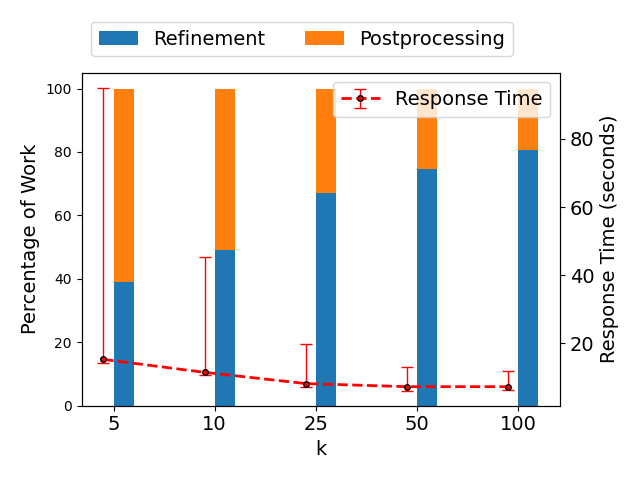} 
        	\vspace{-6mm}
        	\caption{}
            \label{fig:k}
    \end{subfigure}
    \hfill
    \begin{subfigure}[t]{0.24\linewidth}
        	\centering
        	\includegraphics[width =\textwidth]{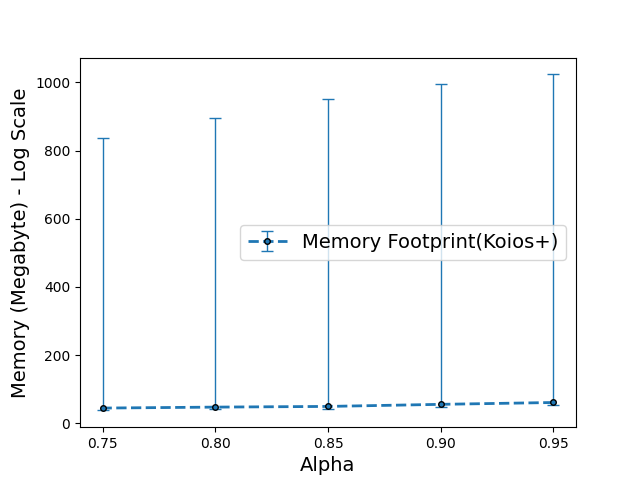}
        	\vspace{-6mm}
        	\caption{}
            \label{fig:memodalpha}
    \end{subfigure}
    \caption{\camera{Parameter analysis of \algo on OpenData: Time vs. (a) number of partitions, (b) element similarity ($\alpha$), and (c) result size, (d) memory footprint vs. alpha.}}
    \label{fig:params}
\end{figure*}

{\bf Post-processing Phase:} \fn{Table~\ref{tbl:datasets_filters} shows that the No-EM filter demonstrates a higher pruning power than EM-Early-Terminated, e.g.,  pruning more than half of sets for OpenData.  Note that the reported percentages refer to the sets that are not filtered in the refinement phase.} From \eat{Figures~\ref{fig:ppod} and~\ref{fig:ppwdc}}\pranayrev{Tables~\ref{tbl:opendata_pruning} and~\ref{tbl:wdc_pruning}} we observe that the combination of No-EM and EM-Early-Terminated have the highest pruning power for large queries. For OpenData, the combination of these two filters prunes more sets than in WDC. This is because an exact calculation of semantic overlap during the post-processing increases $\theta_k$ and results in pruning many sets.

\begin{table}[h]
\centering
\captionof{table}{\fnrev{\camera{OpenData: \#sets pruned by filters.}}}
\label{tbl:opendata_pruning}
\resizebox{\columnwidth}{!}{
    \begin{tabular}{|c|cc|ccc|}
    \hline
     & \multicolumn{2}{c|}{Refinement Phase} & \multicolumn{3}{c|}{Postprocessing Phase} \\
     Query Card. & Candidate & iUB-Filtered & No-EM & EM-Early & EM\\
      & Sets& & & Terminated & \\
     \hline
     $10-750$ & $1132$ & $345$ & $88$ & $0$ & $699$\\
     $750-1000$ & $2557$ & $2422$ & $85$ & $2$ & $48$ \\
     $1000-1500$ & $2699$ & $2571$ & $83$ & $4$ & $41$ \\
     $1500-2500$ & $3440$ & $3328$ & $84$ & $2$ & $26$ \\
     $2500-5000$ & $3560$ & $3451$ & $82$ & $4$ & $23$ \\
     $>5000$ & $5706$ & $5502$ & $79$ & $5$ & $120$ \\
    \hline
    \end{tabular}
}
\end{table}
\begin{table}[h]
\captionof{table}{\fnrev{\camera{WDC: \#sets pruned by filters.}}}
\label{tbl:wdc_pruning}
\centering
\resizebox{\columnwidth}{!}{
    \begin{tabular}{|c|cc|ccc|}
    \hline
     & \multicolumn{2}{c|}{Refinement Phase} & \multicolumn{3}{c|}{Postprocessing Phase} \\
     Query Card. & Candidate  & iUB-Filtered & No-EM & EM-Early & EM\\
     & Sets & & & Terminated & \\
     \hline
     $20-250$ & $124,217$ & $60,196$ & $74$ & $80$ & $63,867$ \\
     $250-500$ & $189,665$ & $186,512$ & $90$ & $3$ & $3,060$ \\
     $500-750$ & $262,947$ & $261,901$ & $85$ & $6$ & $953$ \\
     $750-1000$ & $274,695$ & $273,743$ & $83$ & $26$ & $843$ \\
     $>1000$ & $402,622$ & $402,332$ & $84$ & $3$ & $203$ \\
    \hline
    \end{tabular}
}
\end{table}

\subsection{Memory Footprint}
\label{sec:memory}

\pranay{\algo is an in-memory algorithm.} We report the average memory across benchmarks for all datasets in Table~\ref{tbl:datasets_time}. The extended version of the paper contains an in-depth analysis of \algo memory footprint. Fig.~\ref{fig:memod} and~\ref{fig:memwdc} show the memory footprint of  
\algo \pranay{and the baseline} for OpenData and WDC. The reported values are the average memory footprint of data structures over successful queries in each interval. Note that some data structures such as inverted index, token index, and top-$k$ lists have fixed sizes for all query intervals. To have the whole picture of the memory footprint, the numbers reported in these plots and Table~\ref{tbl:datasets_time} are the sum of the footprint of data structures  used in the refinement phase and post-processing phase, although the data structures used in the refinement phase, except the top-$k$ LB list ($\mathcal{L}_{lb}$), are freed up at the end of the phase. Table~\ref{tbl:datasets_time} shows that \algo' memory utilization is comparable to the baseline.

{\bf Effect of Query Cardinality:} In Fig.~\ref{fig:memod} and~\ref{fig:memwdc}, we observe that the memory footprint increases linearly with the query cardinality. This can be explained by\eat{: 1)} the linear increase in the number of candidate sets (\eat{Figures~\ref{fig:rpod} and~\ref{fig:rpwdc}}\pranayrev{Tables~\ref{tbl:opendata_pruning} and~\ref{tbl:wdc_pruning}}), thus, the size of the query-dependent data structures: token stream, upper-bound buckets, lower-bound data structure, and priority queues increases as the query cardinality increases.\eat{, and 2) the quadratic increase in the size of the similarity matrix.} 
\eat{\fn{quadratic. increase in matrix size -> linear increase in mem footprint?}}

{\bf Comparing to Baseline: }Fig.~\ref{fig:memod} and~\ref{fig:memwdc} show the average memory footprint of the baselines and \algo.\eat{We see that for} \pranay{For} small and medium queries for both OpenData and WDC, the memory footprint is similar.
\eat{For large queries for OpenData \algo takes up less memory as compared to \algo.}\fn{For large queries, \algo takes up less memory for OpenData compared to WDC.} This is because, the iUB-Filter prunes most of the candidate sets, hence reducing the size of the post-processing data structures. Note that for\eat{WDC} \pranay{the baseline} we do not have enough data for\eat{the baselines} \pranay{WDC} on large queries as almost all queries time  out.  
\eat{We also see from Fig.~\ref{fig:memodalpha} that increasing $\alpha$ results in a slight increase in the memory footprint for \algo. This is because, by increasing alpha, we get a smaller token stream and a decrease in the number of candidate sets. With fewer sets being processed we converge to a much smaller value of $\theta_{lb}$ which results in more sets reaching post-processing and this results in larger post-processing data structures, hence, the increase in the memory footprint.}

\eat{
\pranayrev{\subsection{\todo{TODO:} Comparing to SILKMOTH}
We compare our framework to SILKMOTH, a specialized system capable of detecting related set pairs in a collection of sets~\cite{DengKMS17}. Although SILKMOTH supports a variety of similarity measures, their novel filtering methods that reduce the number of candidate sets rely on the requirement that the similarity function satisfies the triangle inequality, which limits the measures for string matching to Jaccard and Edit similarity. The filters utilized by \algo, on the other hand, are independent of the properties of the similarity function, therefore our performance is not limited by the similarity measure. For example, unlike \algo, SILKMOTH does not enable cosine similarity between word embeddings to find related sets. Furthermore, \algo is not restricted to string matches; rather, it can aid in the discovery of related sets if there exists a similarity function between set elements as defined in Definition~\ref{def:semoverlap}.} 

\pranayrev{To demonstrate Koios' efficiency and generalizability, we compare it to SILKMOTH with the similarity specific filters turned off on 54 randomly picked benchmark queries from OpenData's queries over all intervals to have a fair representation. We use the same similarity function as SILKMOTH : Jaccard Similarity on q-grams of elements, where $q = 3$, and elements are the set elements. As per our knowledge, there doesn't exist an efficient index, that could help generate the token stream on-the-fly for the above similarity metric, so we precompute the token streams for each query with the similarity threshold, $\alpha = 0.8$, using~\cite{MannAB16}. Since SILKMOTH performs a threshold based search and \algo does a top-$k$ discovery, to have a fair comparision we run SILKMOTH under the problem setting of RELATED SET DISCOVERY with $\alpha = 0.8$ and the relatedness threshold $\delta = 0.02$ which is the minimum $\theta^*_k$ among the $54$ benchmark queries.} \todo{Add discussion based on results.}}

\subsection{Quality of Results}
\label{sec:syntactic}
\pranay{To evaluate the quality benefit of semantic overlap, we compare the top-$k$ semantic search results with the top-$k$ syntactic search results on OpenData, using the vanilla set overlap.  \fn{Fig.~\ref{fig:syn-sem} reports the scores for the $k$-th set in the  top-$k$ lists. Comparing the syntactic overlap of the $k$-th set in the  top-$k$ syntactic list with that of the   top-$k$ semantic list,  we observe that semantic overlap finds sets with lower syntactic overlap (fewer exact matching elements) but higher semantic overlap (more elements with semantic similarity). These are sets that often  contain syntactically dirty elements, for example ({\tt squirrel}, {\tt squirrell}) and ({\tt konstantine}, {\tt konstantin}), or syntactically mismatching elements yet  semantically similar elements, for example ({\tt Leeds}, {\tt Sheffield}). } 
\fn{We also investigated the sets returned by the two searches and report the size of the intersection of results. Fig.~\ref{fig:syn-sem} shows that semantic overlap finds sets that cannot be otherwise discovered by the vanilla overlap.\eat{This is more obvious for small and medium size sets, in particular, for the smallest interval almost 50\% of results are nowhere to  be found by the vanilla overlap.}} \pranay{In particular, for the smallest interval vanilla overlap misses $50\%$ of the results.} \pranayrev{This shows how semantic overlap can help find sets that would not be part of the top-$k$ result if only syntactic overlap was considered. Note that \algo returns an exact solution as long as the index returns exact results.}  
\eat{The intersection size of the top-k sets for both searches for OpenData over all intervals is bigger than 0.5, and the semantic score for the sets is higher than the syntactic results. This demonstrates the validity of our algorithm's results as well as the fact that we find more matches than the syntactic overlap, resulting in a higher semantic score.}}

\subsection{Analysis of Parameters}
\label{sec:params}

Our algorithm uses three parameters for search: 1) an element similarity threshold ($\alpha$), 2) the number of data partitions, and 3) a user-provided number of result sets ($k$).  
We performed an empirical study of the impact of these  parameters on the response time of \algo. For this set of experiments, we choose 100  queries from the benchmark of OpenData at random. Since the set cardinality in data repositories follows a Zipfian distribution~\cite{ZhuNPM16,ZhuDNM19}, random sampling from  benchmark intervals prevents from having a benchmark that is heavily biased to smaller query sets.\eat{Plots of} Fig.~\ref{fig:params} shows the average response time and the breakdown of the ratio of refinement and post-processing time of queries for various parameter values. 

\begin{figure}[tbp]
\centering 
    \includegraphics[width=0.7\linewidth]{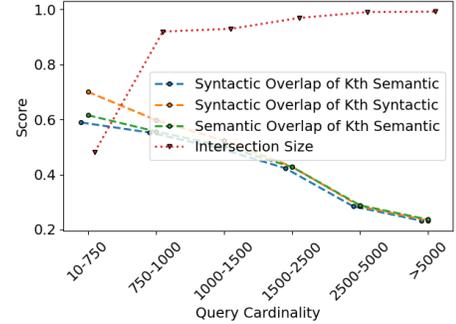}
    \caption{\fnrev{\camera{Comparison of vanilla and semantic overlap.}}}
\label{fig:syn-sem}
\end{figure}

{\bf Number of Partitions: } In Fig.~\ref{fig:parts}, we fix $k=10$ and $\alpha=0.8$ and vary the number of partitions. The response time decreases\eat{with the increase in} \pranay{as} the number of partitions \pranay{increases}. This is because a larger number of partitions\eat{means} \pranay{results in} a smaller number of sets to process\eat{in a} \pranay{per} partition. Since partitions\eat{run the search algorithm} \pranay{are processed} in parallel, the response time\eat{is smaller when we have more} \pranay{decreases with more} partitions.\eat{Note that since} \pranay{Since} sets are randomly assigned to partitions, partitions have the same expected number of sets. The top-$k$ result of all partitions\eat{needs to} \pranay{are} merge-sorted after all partitions finish.\eat{With a large number of partitions, the algorithm need to merge more top-$k$ results.} However, the merging cost is negligible \pranay{compared to the overall runtime.}\eat{at least for $k=10$.} 

In addition to the average response time, Fig.~\ref{fig:parts} reports the average percentage of work in the refinement  and post-processing of a partition. 
One interesting observation is that as the number of partitions increases, the percentage of post-processing \pranay{sets} becomes smaller. This is because, with a large number of partitions, more sets are considered per time unit across all partitions and $\theta_{lb}$ grows quickly. \pranay{As a result, iUB-Filter has higher pruning power.}\eat{for a larger  number of partitions.}

{\bf Element Similarity Threshold: }In Fig.~\ref{fig:alpha}, we fix $k=10$, the number of partitions to 10, and vary the value of $\alpha$. We observe that the higher $\alpha$, the smaller the average response time for a query. A smaller $\alpha$ means more elements in the vocabulary are potentially considered for matching with element queries, and as a result, more candidate sets are considered by the algorithm.\eat{The pruning power of refinement filters is independent of the value of $\alpha$. However, because the cost of graph matching grows with the number of edges in a bipartite graph, larger $\alpha$ means on average more edges in the graph and on average higher matching time.} \pranay{While the pruning power of refinement filters is independent of the value of $\alpha$, the cost of graph matching grows with the number of edges in a bipartite graph: smaller $\alpha$  values result in more edges in the graph and therefore higher matching time. Fig.~\ref{fig:memodalpha} also shows that increasing $\alpha$ results in a slight increase in the memory footprint for \algo. This is because by increasing $\alpha$, we get a smaller token stream and a decrease in the number of candidate sets.\eat{With fewer sets being processed we} We converge to a\eat{much} smaller value of $\theta_{lb}$ which results in more sets reaching post-processing; this results in larger post-processing data structures and hence the increase in the memory footprint.}

{\bf Result Set Cardinality: }In Fig.~\ref{fig:k}, we fix the number of partitions to 10 and $\alpha=0.8$ and vary the value of $k$. 
For a \pranay{given} value of $k$ in this plot, each partition runs a top-$k$ search and the final results from partitions are merged, i.e. for $k=50$, we merge ten  top-$50$ lists returned from  ten partitions. 
The observation of a decrease in average response time with the increase of $k$ is counter-intuitive since higher $k$ means lower $\theta_{lb}$ and lower pruning power of the iUB-Filter. However, the  response time decreases because the average post-processing work decreases with the increase of $k$. 

\eat{The  behavior of the response time and the  post-processing work can be explained by considering  the number of sets being considered and those that reach post-processing.} \pranay{The response time and the post-processing work behavior can be explained by looking at the number of sets considered and those that reach post-processing.} With a large $k$, since the iUB-filter prunes a lot of sets, those that reach the post-processing phase end up in the top-$k$ result, and many of them are filtered using the EM-Early-Terminated-Filter and are quickly added to the top-$k$ result. 

\section{Related Work}
\label{sec:relatedwork}

\fnrev{{\bf{\em (Fuzzy) Set Similarity Search}} The majority of \camera{works}  in set similarity search take into account syntactic measures like containment and Jaccard on sets of tokens  
and use a threshold search~\cite{jia2018survey,MannAB16}. 
To avoid computing pairwise \eat{similarity computation}\camera{similarities} of set, the common technique is to apply  a filter-verification paradigm with a core step being  prefix-filtering 
~\cite{DBLP:conf/www/BayardoMS07,DBLP:conf/icde/ChaudhuriGK06,DBLP:conf/www/BayardoMS07}. 
PPJoin extends prefix filtering by incorporating a positional filtering technique that uses token ordering information to reduce candidate sizes even further~\cite{DBLP:journals/tods/XiaoWLYW11}.  There has also been work on partition-based search, where the goal is to partition the data to allow for faster search, for example, SSJoin and GreedyPlus~\cite{arasu2006efficient,deng2015efficient}.}

\eat{
{\bf{\em Fuzzy Set Similarity Search}} To deal with misspellings and dirty sets, given a query set, fuzzy set similarity compares candidate sets based on the maximum bipartite graph matching. Yu et al. describe numerous fuzzy string search algorithms, and we see that the majority of the techniques rely on some prefix filtering method and so are dependent on the similarity function~\cite{Yu:2016}. There are two types of work in this area: algorithms that estimate the matching score and algorithms that strive to reduce the number of candidate sets that require matching computation~\cite{DengLFL13,WangLDZF15}. Agarwal et al.~\cite{AgrawalAK10}, for example, propose a token transformation-based containment metric that approximates the matching score. Algorithms that prune candidate sets, for example, Fast-join~\cite{5767865}, SILKMOTH~\cite{DengKMS17}, employ filters based on character-based similarity functions. Fast-join uses a token-based signature and edit distance based weights to assist with graph matching. 
\algo differs from the preceding algorithms in that it works for arbitrary (semantic) similarity functions rather than character-based  similarity functions. }  

{\bf{\em Table Join Search}} \eat{The majority of}\camera{Most existing} join search techniques consider equi-join and use (normalized) cardinality of the overlap of  sets of  attribute values as joinability measure~\cite{ZhuDNM19,FernandezMNM19,DengYS0L019,ZhuNPM16,BogatuFP020,CasteloRSBCF21}. JOSIE combines  filtering techniques from the set similarity search literature to solve the join search problem using vanilla  overlap~\cite{ZhuDNM19}. A common way to obtain  approximate join search results is to construct an LSH index on set signatures  generated using hash functions such as MinHash~\cite{Broder97,ZhuNPM16,FernandezMNM19}.

\fnrev{{\bf{\em Semantic Techniques}} PEXESO is a threshold-based set similarity search technique based on an extension of the vanilla overlap~\cite{DongT0O21}. 
In PEXESO, two elements are considered as ``matching'' if they have a  similarity  greater than a user-specified threshold. The set similarity is then defined  as the number of elements in a query that has at least one matching element in a candidate set, normalized by the query cardinality. This implies that elements  can participate in many-to-many matchings\eat{, making} \camera{such that} the vanilla overlap \eat{not}\camera{cannot be expressed as} a special case of the proposed measure. 
SEMA-JOIN takes two sets of values from join columns as input and produces a predicted join relationship (many-to-one join on cell values)~\cite{HeGC15}. 
To do so, SEMA-JOIN finds the join relationship that maximizes \camera{the} aggregate pairwise semantic  correlation. Relying on a big table corpus (100M tables from the web), the statistical co-occurrence is used as a proxy for semantic correlation. The intuition is that two values  can be joined semantically  (e.g., {\tt GE} and {\tt General Electric})\eat{,} 
if there exists significant statistical co-occurrence of these values 
in the same row in the corpus (row co-occurrence score). Moreover, \eat{since} \eat{Similarly} pairs of pairs are required to be semantically compatible, i.e., they  \camera{should} co-occur in the same columns in the corpus (column-level co-occurrence). Finally,  to join two columns\camera{,} SEMA-JOIN aims at  maximizing the aggregate pairwise column-level and row-level correlation. Unlike  \algo\camera{,} which solves the search problem of finding sets that can be joined with a query set, SEMA-JOIN  finds the best way of joining column elements after discovery.}

\eat{
\noindent{\bf{\em Overlap Search}} SILKMOTH also studies threshold-based join search. SILKMOTH  measures the relatedness of two sets with the maximum graph matching in the bipartite graph of sets~\cite{DengKMS17}. To determine similarity of two values (an edge weight), SILKMOTH considers token-based Jaccard similarity of values and character-based edit similarity. Unlike SILKMOTH, our definition of semantic overlap works with any similarity function. 
To perform the join search, SILKMOTH creates a signature for each set. The signature of a set $C$ is the smallest subset of tokens from $C$ such that if another set $Q$ does not share any tokens with $C$’s signature, $C$ and $Q$ cannot be joined.  
To generate signatures, SILKMOTH leverages the threshold of joinability.  The threshold on set relatedness gives a lower bound on the maximum matching score.
Using the signature, SILKMOTH selects potentially related candidate sets from the repository. 
Once the candidates are found because SILKMOTH uses Jaccard and edit similarity functions, it can take advantage of the triangle inequality in the bipartite maximum matching to speed up the processing of candidate sets for eliminating false positives. 
}

\section{Conclusion}

\eat{To enable semantic\eat{table join discovery,} similarity set search, we defined the semantic overlap measure which is the maximum matching score of the bipartite graph built on the sets of values of two columns, using arbitrary similarity functions that assign weights to value pairs. We solve the top-$k$ semantic overlap search problem using a two-phase algorithm augmented with filters. Our experiments show that the proposed algorithm has an overall low response time and low memory footprint on four data repositories with different characteristics. In particular, for medium to large sets less than 5\% of candidate sets need post-processing and more than half of those sets are further pruned \pranay{without requiring the expensive} graph matching. 
In our future work, we plan to \eat{incorporate more intelligent partitioning algorithms based on estimations of work required for processing sets.}\pranay{\eat{We also intend to} expand the semantic overlap to instances\eat{where there exists} with many-to-1 mappings, to cover noise or spelling variations within the query, for example, {\tt United States of America} and {\tt United States} can both be mapped to {\tt USA} with equal similarity.} Furthermore, we plan to evaluate our algorithm with various similarity functions.} 

\pranayrev{We defined the semantic overlap measure.  
To solve the top-$k$ semantic overlap search problem, we introduced \algo, an exact and efficient filter-verification framework with powerful and cheap-to-update filters that decrease the graph-matching computation to less than $5\%$ of the candidate sets. We demonstrated that \algo has a low response time and memory footprint \eat{across}\camera{in experiments on} four different datasets.  
In our future work, we plan to expand the semantic overlap to instances\eat{where there exists} with many-to-1 mappings\eat{,} to cover noise or spelling variations within the query, for example, {\tt United States of America} and {\tt United States} can both be mapped to {\tt USA} with equal similarity.}

\eat{\textbf{Acknowledgements.} This research was partially funded by the Austrian Science Fund (FWF) P\,34962. For the purpose of open access, the authors have applied a CC BY public copyright licence to any Author Accepted Manuscript version arising from this submission.}
\section*{Acknowledgements} 
This research was partially funded by the Austrian Science Fund (FWF) P\,34962. For the purpose of open access, the authors have applied a CC BY public copyright licence to any Author Accepted Manuscript version arising from this submission. It was also supported in part by the National Science Foundation under grant 2107050. 

\clearpage
\newpage
\bibliographystyle{IEEEtran}
\bibliography{IEEEabrv,main}
\end{document}